\newcommand{\Osymbol}{{\mathcal O}}
\newcommand{\BO}[1]{\Osymbol\left(#1\right)}
\newcommand{\floor}[1]{\lfloor{#1}\rfloor}
\newcommand{\ceil}[1]{\lceil{#1}\rceil}
\newcommand{\E}[1]{\mathbb{E}\left[#1\right]}
\renewcommand{\Pr}[1]{\text{Pr}\left[#1\right]}
\begin{document}

%\title{Exact Similarity Search using Covering-LSH}
%\title{Locality-sensitive Hashing without False Negatives}

% Page heads
\markboth{R. Pagh}{CoveringLSH: Locality-sensitive Hashing without False Negatives}

% Title portion
\title{CoveringLSH: Locality-sensitive Hashing without False Negatives}
\author{RASMUS PAGH
\affil{IT University of Copenhagen}}

\begin{abstract}
We consider a new construction of locality-sensitive hash functions for Hamming space that is \emph{covering} in the sense that is it guaranteed to produce a collision for every pair of vectors within a given radius $r$. 
The construction is \emph{efficient} in the sense that the expected number of hash collisions between vectors at distance~$cr$, for a given $c>1$, comes close to that of the best possible data independent LSH without the covering guarantee, namely, the seminal LSH construction of Indyk and Motwani (STOC '98).
The efficiency of the new construction essentially \emph{matches} their bound when the search radius is not too large --- e.g., when $cr = o(\log(n)/\log\log n)$, where $n$ is the number of points in the data set, and when $cr = \log(n)/k$ where $k$ is an integer constant.
In general, it differs by at most a factor $\ln(4)$ in the exponent of the time bounds. 
As a consequence, LSH-based similarity search in Hamming space can avoid the problem of false negatives at little or no cost in efficiency.
\end{abstract}

%
% The code below should be generated by the tool at
% http://dl.acm.org/ccs.cfm
% Please copy and paste the code instead of the example below. 
%
\begin{CCSXML}
<ccs2012>
<concept>
<concept_id>10003752.10003809.10010055.10010060</concept_id>
<concept_desc>Theory of computation~Nearest neighbor algorithms</concept_desc>
<concept_significance>500</concept_significance>
</concept>
<concept>
<concept_id>10003752.10003809.10010031.10010033</concept_id>
<concept_desc>Theory of computation~Sorting and searching</concept_desc>
<concept_significance>300</concept_significance>
</concept>
</ccs2012>
\end{CCSXML}

\ccsdesc[500]{Theory of computation~Nearest neighbor algorithms}
\ccsdesc[300]{Theory of computation~Sorting and searching}

% End generated code
%

\keywords{Similarity search, high-dimensional, locality-sensitive hashing, recall}

\acmformat{Rasmus Pagh, 2016. CoveringLSH: Locality-sensitive Hashing without False Negatives.}

\begin{bottomstuff}
The research leading to these results has received funding from the European Research Council under the European Union’s 7th Framework Programme (FP7/2007-2013) / ERC grant agreement no.~614331.
\end{bottomstuff}

%\pagenumbering{arabic}
%\setcounter{page}{1}%Leave this line commented out.

\maketitle

\section{Introduction}

Similarity search in high dimensions has been a subject of intense research for the last decades in several research communities including theory of computation, databases, machine learning, and information retrieval.
In this paper we consider nearest neighbor search in Hamming space, where the task is to find a vector in a preprocessed set $S\subseteq \{0,1\}^d$ that has minimum Hamming distance to a query vector $y\in\{0,1\}^d$.

It is known that efficient data structures for this problem, i.e., whose query and preprocessing time does not increase exponentially with~$d$, would disprove the strong exponential time hypothesis~\cite{williams2005new,DBLP:conf/focs/AlmanW15}.
For this reason the algorithms community has studied the problem of finding a \emph{$c$-approximate} nearest neighbor, i.e., a point whose distance to $y$ is bounded by $c$ times the distance to a nearest neighbor, where $c > 1$ is a user-specified parameter.
If the exact nearest neighbor is sought, the approximation factor $c$ can be seen as a bound on the relative distance between the nearest and the second nearest neighbor.
All existing $c$-approximate nearest neighbor data structures that have been rigorously analyzed have one or more of the following drawbacks:
\begin{enumerate}
	\item Worst case query time linear in the number of points in the data set, or
	\item Worst case query time that grows exponentially with $d$, or
	\item Multiplicative space overhead that grows exponentially with $d$, or
	\item Lack of unconditional guarantee to return a nearest neighbor (or $c$-approximate nearest neighbor).
\end{enumerate}
Arguably, the data structures that come closest to overcoming these drawbacks are based on locality-sensitive hashing (LSH).
For many metrics, including the Hamming metric discussed in this paper, LSH yields sublinear query time (even for $d\gg \log n$) and space usage that is polynomial in $n$ and linear in the number of dimensions~\cite{Indyk1998,gionis1999similarity}.
If the approximation factor $c$ is larger than a certain constant (currently known to be at most~$3$) the space can even be made $\BO{nd}$, still with sublinear query time~\cite{panigrahy2006entropy,kapralov2015smooth,DBLP:journals/corr/Laarhoven15a}.

However, these methods come with a Monte Carlo-type guarantee:
A $c$-approximate nearest neighbor is returned only \emph{with high probability}, and there is no efficient way of detecting if the computed result is incorrect.
This means that they do not overcome the 4th drawback above.

\medskip

\paragraph{Contribution}
In this paper we investigate the possibility of Las Vegas-type guarantees for ($c$-approximate) nearest neighbor search in Hamming space.
Traditional LSH schemes pick the sequence of hash functions independently, 
which inherently implies that we can only hope for high probability bounds.
Extending and improving results by~\cite{greene1994multi} and~\cite{Arasu_VLDB06} we show that in Hamming space, by suitably correlating hash functions we can ``cover'' all possible positions of~$r$ differences and thus eliminate false negatives, while achieving performance bounds comparable to those of traditional LSH methods.
By known reductions~\cite{DBLP:conf/stoc/Indyk07} this implies Las Vegas-type guarantees also for $\ell_1$ and $\ell_2$ metrics.
Since our methods are based on combinatorial objects called \emph{coverings} we refer to the approach as \emph{CoveringLSH}.

\medskip

Let $||x-y||$ denote the Hamming distance between vectors $x$ and $y$.
Our results imply the following theorem on similarity search (specifically $c$-approximate near neighbor search) in a standard unit cost (word RAM) model:

\begin{theorem}\label{thm:RAM}
		%There exists a function $f: {\bf N}\times {\bf N}\times {\bf R} \rightarrow {\bf R}$ for which, g
		Given $S\subseteq \{0,1\}^d$, $c>1$ and $r\in {\bf N}$, we can construct a data structure such that for $n=|S|$ and a value $f(n,r,c)$ bounded by
		 $$f(n,r,c) = \left\{ \begin{array}{ll}
 \BO{1} & \text{if } \log(n)/(cr)\in {\bf N}\\
 (\log n)^{\BO{1}} & \text{if } cr \leq \log(n)/(3\log\log n)\\
 \BO{\min\left(n^{0.4/c}\, r ,\, 2^r \right)} & \text{for all parameters}
 \end{array} \right. \enspace ,$$		 
	the following holds:
		\begin{itemize}
		\item On query  $y\in \{0,1\}^d$ the data structure is guaranteed to return $x\in S$ with $||x-y|| < cr$ if there exists $x'\in S$ with $||x'-y|| \leq r$.
		 \item The expected query time is $\BO{f(n,r,c)\, n^{1/c} (1+d/w)}$, where~$w$ is the word length.
		 \item The size of the data structure is $\BO{f(n,r,c)\, n^{1+1/c}\log n + nd}$ bits.
		\end{itemize}
\end{theorem}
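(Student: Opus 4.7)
The plan is to reduce the theorem to the design of a single structured object: a \emph{covering family} of bit-sampling hash functions, each of the form $h_\ell(x) = x|_{S_\ell}$ for some $S_\ell \subseteq [d]$ of size close to a target parameter $k$. The rest of the construction is textbook LSH --- build one hash table per $h_\ell$, store each $x\in S$ in the bucket $h_\ell(x)$, and on query $y$ scan the candidate set $\bigcup_\ell \{x \in S : h_\ell(x) = h_\ell(y)\}$ returning any $x$ with $\|x-y\| < cr$. Correctness (no false negatives) follows as soon as $\{S_1,\dots,S_L\}$ satisfies the \emph{covering} property: for every $D \subseteq [d]$ with $|D|\le r$ some $S_\ell$ is disjoint from $D$, which forces an automatic collision for every pair at Hamming distance at most $r$. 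Efficiency follows from an \emph{LSH-like} property: the average far-pair collision probability across the family should match the ideal $(1-cr/d)^k = n^{-1/c}$, up to the overhead factor $f(n,r,c)$ that the theorem allows.

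For the construction I would identify the $d$ coordinates with distinct nonzero vectors of $\mathbb{F}_2^t$ (where $t = \lceil\log_2(d+1)\rceil$) and parameterize the hash functions by $k$-tuples $\mathbf{a} = (a_1,\dots,a_k)$ of nonzero vectors, letting $S_{\mathbf{a}} = \{i\in[d] : \langle a_j, i\rangle = 0 \text{ for all } j\}$. Then $S_{\mathbf{a}} \cap D = \emptyset$ exactly when, for every $i \in D$, at least one $j$ gives $\langle a_j, i\rangle = 1$; this is a combinatorial condition on a $k\times r$ Boolean matrix having no all-zero row, and I would enforce it over a subfamily $\mathcal{H}$ of size $f(n,r,c)$ either deterministically through a short $\mathbb{F}_2$-linear-algebra argument or probabilistically followed by derandomization. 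The efficiency analysis is then routine: using linearity of expectation over a uniformly random $h\in\mathcal{H}$, I would bound the expected number of far-pair candidates by $|\mathcal{H}| \cdot n \cdot (1-cr/d)^k \cdot (\text{slack})$, where the slack quantifies how far the structured family is from truly random sampling. Combined with the $\BO{1+d/w}$ cost of Hamming verification and standard accounting for $|\mathcal{H}|$ hash tables, this gives the claimed query-time and space bounds.

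The three regimes of $f(n,r,c)$ come from three settings of $k$. In the perfectly balanced case $\log(n)/(cr)\in{\bf N}$, one can take $k$ so that $(1-cr/d)^k = n^{-1/c}$ exactly and trim the covering family to constant size. When $cr\le\log(n)/(3\log\log n)$, the integer rounding of $k$ incurs only a polylogarithmic blow-up in $|\mathcal{H}|$. In the general case the slack is bounded by the $\ln 4$ factor advertised in the abstract, yielding $n^{0.4/c}\,r$; in parallel, when $r$ is small a trivial construction that enumerates all $2^r$ distributions of differences among sampled coordinates gives the competing $2^r$ bound, and one takes the minimum. The main obstacle is the structural lemma of the second paragraph: covering is easy with a large family and efficiency is easy with a small one, so balancing the two so that $|\mathcal{H}|$ is no larger than the stated $f(n,r,c)$ --- and in particular recovering the $\ln 4$ exponent in the worst case --- is exactly where the $\mathbb{F}_2$-linear structure has to do real work; once that lemma is in hand the remainder of the theorem is bookkeeping within the standard LSH framework.
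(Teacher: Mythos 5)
Your high-level framing is right --- the proof does reduce to constructing $r$-covering Hamming-projection families and then plugging into the standard multi-table LSH framework --- but the concrete plan diverges from the paper's in ways that leave the hard part unproven, and several of your stated targets are off by enough that carrying them out as written would not give the theorem.

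The most consequential error is the size of the covering family. You aim for a family $\mathcal{H}$ of size $f(n,r,c)$, but the family must have size roughly $f(n,r,c)\cdot n^{1/c}$: the query time $\BO{f\,n^{1/c}(1+d/w)}$ already accounts for $|\mathcal{H}|$ hash-value computations at $\BO{1+d/w}$ each, so $|\mathcal{H}| = \BT{f\,n^{1/c}}$, not $\BT{f}$. In particular the claim that in the balanced case $\log(n)/(cr)\in{\bf N}$ one can ``trim the covering family to constant size'' is false; in that case the paper's family still has $\approx 2n^{1/c}$ functions, and it is only the \emph{overhead relative to Indyk--Motwani} that is constant. Related to this, the efficiency target ``the far-pair collision probability across the family should match $(1-cr/d)^k = n^{-1/c}$'' does not describe what the construction achieves: the paper's per-function collision probability is $\bigl(1-(1-2^{-t})q/b\bigr)^{\|x-y\|}$, a quantity that is independent of $d$ and that comes from the independence of $a(v)_i$ over $i$ for a random map $m$, not from sampling $k$ coordinates out of $d$.

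The construction itself is also inverted relative to the paper's, in a way that makes the covering argument nontrivial. The paper maps each coordinate to a \emph{random} vector $m(i)\in\{0,1\}^{r+1}$ and indexes each hash function by a \emph{single} nonzero $v\in\{0,1\}^{r+1}$, keeping coordinate $i$ iff $\langle m(i),v\rangle\ne 0$; the covering then follows immediately from a rank count (any $r$ images span a subspace of dimension $\le r < r+1$, hence have a nonzero orthogonal $v$), giving the whole family of $2^{r+1}-1$ functions for free. You instead fix the embedding $i\mapsto$ binary representation in $\{0,1\}^{t}$ with $t=\lceil\log_2(d+1)\rceil\gg r+1$, and index hash functions by $k$-tuples $(a_1,\dots,a_k)$, keeping $i$ iff \emph{all} $\langle a_j,i\rangle = 0$. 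Now covering is a condition on the whole family of tuples, not a per-tuple rank fact, and you only gesture at ``a short $\mathbb{F}_2$-linear-algebra argument or probabilistically followed by derandomization''; this is exactly the content of Lemmas~\ref{lemma:covering1}--\ref{lemma:covering2} and would need to be supplied. (Also, with a fixed embedding the collision probability no longer depends solely on $\|x-y\|$, so you would additionally need the random-permutation trick from Section~\ref{sec:lsh}, which the paper's random choice of $m$ subsumes.)

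Finally, the three regimes of $f$ are attributed to the wrong mechanisms. The paper's $(\log n)^{\BO{1}}$ bound for small $cr$ does not come from integer rounding over $\mathbb{F}_2$; it requires switching to coverings over $\mathbb{F}_p$ for a prime $p\approx n^{1/(cr)}$ (Section~\ref{sec:small-radius}), using prime-gap results to control $p^{r+1}$, a step you do not mention. The $\BO{n^{0.4/c}\,r}$ bound comes from a \emph{partitioning} scheme ($\mathcal{A}_2$ with $b=r$, $q=2\lceil\ln(n)/c\rceil$, $t=1$ in Corollary~\ref{cor:params}); the $\ln 4$ exponent falls out of the arithmetic $2^{q}$ versus $e^{-qc/2}$, not from any ``slack from the structured family,'' and your proposal does not contain partitioning at all. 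And the $2^r$ bound does not come from enumerating ``$2^r$ distributions of differences'' but from the rounding $t=\lceil\log(n)/(cr)\rceil$ causing $2^{tr}$ to jump by up to a $2^{r}$ factor. Because all of the quantitative work of the theorem lives in these three constructions and their parameter choices, and none of them is carried out here, the proposal as written has a genuine gap at its core.
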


\medskip

Our techniques, like traditional LSH, extend to efficiently solve other variants of similarity search.
For example, we can: 1) handle nearest neighbor search without knowing a bound on the distance to the nearest neighbor, 2) return \emph{all} near neighbors instead of just one, and 3) achieve high probability bounds on query time rather than just an expected time bound. 

When $f(n,r,c) = \BO{1}$ the performance of our data structure matches that of classical LSH with constant probability of a false negative~\cite{Indyk1998,gionis1999similarity}, so $f(n,r,c)$ is the multiplicative overhead compared to classical LSH.
In fact, \cite{o2014optimal} showed that the exponent of $1/c$ in query time is optimal for methods based on (data independent) LSH.

\subsection{Notation}

For a set $S$ and function $f$ we let $f(S) = \{ f(x) \; | \; x\in S\}$.
We use ${\bf 0}$ and ${\bf 1}$ to denote vectors of all 0s and 1s, respectively. 
For $x,y\in\{0,1\}^d$ we use $x\wedge y$ and $x\vee y$ to denote bit-wise conjunction and disjunction, respectively, and $x \oplus y$ to denote the bitwise exclusive-or.
Let $I(x) = \{ i \; | \; x_i = 1\}$.
We use $||x|| = |I(x)|$ to denote the Hamming weight of a vector~$x$, and 
$$||x-y|| = |I(x\oplus y)|$$
to denote the Hamming distance between $x$ and $y$.
For $S\subseteq \{0,1\}^d$ let $\Delta_S$ be an upper bound on the time required to produce a representation of the nozero entries of a vector in $S$ in a standard (word RAM) model~\cite{word-RAM}.
Observe that in general $\Delta_S$ depends on the representation of vectors (e.g., bit vectors for dense vectors, or sparse representations if $d$ is much larger than the largest Hamming weight).
For bit vectors we have $\Delta_S = \BO{1+d/w}$ if we assume the ability to count the number of 1s in a word in constant time\footnote{This is true on modern computers using the {\sc popcnt} instruction, and implementable with table lookups if $w=\BO{\log n}$. 
If only a minimal instruction set is available it is possible to get $\Delta_S = \BO{d/w + \log w}$ by a folklore recursive construction, see e.g.~\cite[Lemma 3.2]{hagerup2001deterministic}.}, and this is where the term $1+d/w$ in Theorem~\ref{thm:RAM} comes from.
We use ``$x \text{ mod } b$'' to refer to the integer in $\{0,\dots,b-1\}$ whose difference from $x$ is divisible by $b$.
Finally, let $\langle x,y\rangle $ denote $||x \wedge y||$, i.e., the dot product of $x$ and $y$.

%%%%%%%%%%%%%%%%%%%%%%%%%%%%%%%%%%%%%%%%%%%%%%%%%%%%%%%%%%%%%%%%%%%%%%%%%%%%%%%

\section{Background and related work}

Given $S\subseteq \{0,1\}^d$ the problem of searching for a vector in $S$ within Hamming distance~$r$ from a given query vector $y$ was introduced by Minsky and Papert as the \emph{approximate dictionary} problem~\cite{minsky1987perceptrons}.
The generalization to arbitrary spaces is now known as the \emph{near neighbor} problem (or sometimes as \emph{point location in balls}).
It is known that a solution to the approximate near neighbor problem for fixed $r$ (known before query time) implies a solution to the nearest neighbor problem with comparable performance~\cite{Indyk1998,HarPeled2012}.
In our case this is somewhat simpler to see, so we give the argument for completeness.
Two reductions are of interest, depending on the size of $d$.
If $d$ is small we can obtain a nearest neighbor data structure by having a data structure for every radius $r$, at a cost of factor $d$ in space and $\log d$ in query time.
Alternatively, if $d$ is large we can restrict the set of radii to the $\BO{\log(n)\log(d)}$ radii of the form $\lceil(1+1/\log n)^i\rceil < d$.
This decreases the approximation factor needed for the near neighbor data structures by a factor $1+1/\log n$, which can be done with no asymptotic cost in the data structures we consider.
For this reason, in the following we focus on the near neighbor problem in Hamming space where $r$ is assumed to be known when the data structure is created.

\subsection{Deterministic algorithms}

For simplicity we will restrict attention to the case $r\leq d/2$.
A baseline is the \emph{brute force} algorithm that looks up all $\binom{d}{r}$ bit vectors of Hamming distance at most~$r$ from $y$.
The time usage is at least $(d/r)^r$, assuming $r\leq d/2$, so this method is not attractive unless~$d^r$ is quite small.
The dependence on $d$ was reduced by~\cite{Cole:2004:DMI:1007352.1007374} who achieve query time $\BO{d+\log^r n}$ and space $\BO{nd + n \log^r n}$.
Again, because of the exponential dependence on~$r$ this method is interesting only for small values of $r$.

\subsection{Randomized filtering with false negatives}\label{sec:lsh}

In a seminal paper~\cite{Indyk1998}, Indyk and Motwani presented 
a randomized solution to the \emph{$c$-approximate} near neighbor problem where the search stops as soon as a vector within distance $cr$ from $y$ is found.
Their technique can also be used to solve the approximate dictionary problem, but the time will then depend on the number of points at distance between $r+1$ and $cr$ that we inspect.
Their data structure, like all LSH methods for Hamming space we consider in this paper, uses a set of functions from a \emph{Hamming projection} family:
\begin{equation}\label{eq:hamming-projection}
\mathcal{H}_\mathcal{A} = \{ x \mapsto x \wedge a \; | \; a\in \mathcal{A} \}
\end{equation}
where $\mathcal{A}\subseteq \{0,1\}^d$.
The vectors in $\mathcal{A}$ will be referred to as \emph{bit masks}.
Given a query $y$, the idea is to iterate through all functions $h\in \mathcal{H}_\mathcal{A}$ and identify collisions $h(x)=h(y)$ for $x\in S$, e.g.~using a hash table.
This procedure \emph{covers} a query $y$ if at least one collision is produced when there exists $x\in S$ with $||x-y||\leq r$, and it is \emph{efficient} if the number of hash function evaluations and collisions with $||x-y|| > cr$ is not too large.
The procedure can be thought of as a randomized \emph{filter} that attempts to catch data items of interest while filtering away data items that are not even close to being interesting.
The \emph{filtering efficiency} with respect to vectors $x$ and $y$ is the expected number of collisions $h(x)=h(y)$ summed over all functions  $h\in \mathcal{H}_\mathcal{A}$, with expectation taken over any randomness in the choice of~$\mathcal{A}$.
We can argue that without loss of generality it can be assumed that the filtering efficiency depends only on $||x-y||$ and not on the location of the differences.
To see this, using an idea from~\cite{Arasu_VLDB06}, consider replacing each $a\in \mathcal{A}$ by a vector $\pi(a)$ defined by $\pi(a)_i = a_{\pi(i)}$, where $\pi: \{1,\dots,d\}\rightarrow \{1,\dots,d\}$ is a random permutation used for all vectors in $\mathcal{A}$.
This does not affect distances, and means that collision probabilities will depend solely on $||x-y||$, $d$, and the Hamming weights of vectors in $\mathcal{A}$.

\medskip

\paragraph{Remark} If vectors in $\mathcal{A}$ are sparse it is beneficial to work with a sparse representation of the input and output of functions in $\mathcal{H}_\mathcal{A}$, and indeed this is what is done by Indyk and Motwani who consider functions that concatenate a suitable number of 1-bit samples from $x$.
However, we find it convenient to work with $d$-dimensional vectors, with the understanding that a sparse representation can be used if $d$ is large. $\triangle$

\medskip

\paragraph{Classical Hamming LSH}
Indyk and Motwani use a collection 
$$\mathcal{A}(R) = \{ a(v) \; | \; v\in R \},$$
where $R \subseteq \{1,\dots,d\}^k$ is a set of uniformly random and independent $k$-dimensional vectors.
Each vector $v$ encodes a sequence of $k$ samples from $\{1,\dots,d\}$, and $a(v)$ is the projection vector that selects the sampled bits.
That is, $a(v)_i = 1$ if and only if $v_j = i$ for some $j\in \{1,\dots,k\}$. 
By choosing $k$ appropriately we can achieve a trade-off that balances the size of $R$ (i.e., the number of hash functions) with the expected number of collisions at distance~$cr$.
It turns out that $|R| = \BO{n^{1/c} \log(1/\delta)}$ suffices to achieve collision probability $1-\delta$ at distance~$r$ while keeping the expected total number of collisions with ``far'' vectors (at distance $cr$ or more) linear in~$|R|$.

\medskip

\paragraph{Newer developments}
In a recent advance of~\cite{DBLP:conf/stoc/AndoniR15}, extending preliminary ideas from~\cite{andoni2014beyond}, it was shown how \emph{data dependent} LSH can achieve the same guarantee with a smaller family (having $n^{o(1)}$ space usage and evaluation time).
Specifically, it suffices to check collisions of $\BO{n^\rho \log(1/\delta)}$ hash values, where $\rho = \tfrac{1}{2c-1}+o(1)$.
We will not attempt to generalize the new method to the data dependent setting, though that is certainly an interesting possible extension.

In a surprising development, it was recently shown~\cite{DBLP:conf/focs/AlmanW15} that even with no approximation of distances ($c=1$) it is possible to obtain truly sublinear time per query if: 1) $d=\BO{\log n}$ \emph{and}, 2) we are concerned with the answers to a \emph{batch} of $n$ queries.

\subsection{Filtering methods without false negatives}\label{sec:nofalseneg}

The literature on filtering methods for Hamming distance that do not introduce false negatives, but still yield formal guarantees, is relatively small. 
As in section~\ref{sec:lsh} the previous results can be stated in the form of Hamming projection families (\ref{eq:hamming-projection}).
We consider constructions of sets $\mathcal{A}$ that ensure collision for every pair of vectors at distance at most~$r$, while at the same time achieving nontrivial filtering efficiency for larger distances.

Choosing error probability $\delta < 1/\binom{d}{r}$ in the construction of Indyk and Motwani, we see that there must exist a set $R^*$ of size $\BO{\log(1/\delta) n^{1/c}}$ that works for every choice of $r$ mismatching coordinates, i.e., ensures collision under some $h\in \mathcal{H}_{\mathcal{A}(R^*)}$ for all pairs of vectors within distance $r$. In particular we have $|\mathcal{A}(R^*)| = \BO{d n^{1/c}}$. However, this existence argument is of little help to design an algorithm, and hence we will be interested in \emph{explicit} constructions of LSH families without false negatives.\footnote{\cite{DBLP:conf/soda/Indyk00} sketched a way to \emph{verify} that a random family contains a colliding function for every pair of vectors within distance $r$, but unfortunately the construction is incorrect~\cite{IndykPersonalCommunication2015}.}

Kuzjurin has given such explicit constructions of ``covering'' vectors~\cite{kuzjurin2000explicit} but in general the bounds achieved are far from what is possible existentially~\cite{Kuzjurin:1995:DAG:204515.204520}.
Independently,~\cite{greene1994multi} linked the question of similarity search without false negatives to the Turán problem in extremal graph theory.
While optimal Turán numbers are not known in general, Greene et al.~construct a family $\mathcal{A}$ (based on \emph{corrector hypergraphs}) that will incur few collisions with \emph{random vectors}, i.e., vectors at distance about $d/2$ from the query point.\footnote{It appears that Theorem 3 of~\cite{greene1994multi} does not follow from the calculations of the paper --- a factor of about 4 is missing in the exponent of space and time bounds~\cite{ParnasPersonalCommunication}.}
% Details: At bottom of page 726 they state that k should be as small as possible. Since k=1 gives a trivial result this means k=2. But for k=2 the space overhead (number of indices/edges) is about binom{2x}{x} d/r ~ 2^{2x} d/r where x = 2r log(n)/d (in our notation), so  the overhead is around n^{4r/d} and not n^{r/d} as claimed.
\cite{gordon1995new} presented near-optimal coverings for certain parameters based on finite geometries --- in section~\ref{sec:small-radius} we will use their construction to achieve good data structures for small~$r$.

\cite{Arasu_VLDB06} give a construction that is able to achieve, for example, $o(1)$ filtering efficiency for approximation factor $c > 7.5$ with $|\mathcal{A}| = \BO{r^{2.39}}$.
Observe that there is no dependence on $d$ in these bounds, which is crucial for high-dimensional (sparse) data.
The technique of~\cite{Arasu_VLDB06} allows a range of trade-offs between $|\mathcal{A}|$ and the filtering efficiency, determined by parameters $n_1$ and $n_2$.
No theoretical analysis is made of how close to~$1$ the filtering efficiency can be made for a given $c$, but it seems difficult to significantly improve the constant~7.5 mentioned above.

Independently of the work of~\cite{Arasu_VLDB06}, ``lossless'' methods for near neighbor search have been studied in the contexts of approximate pattern matching~\cite{kucherov2005multiseed} and computer vision~\cite{norouzi2012fast}.
The analytical part of these papers differs from our setting by focusing on filtering efficiency for random vectors, which means that differences between a data vector and the query appear in random locations. 
In particular there is no need to permute the dimensions as described in section~\ref{sec:lsh}.
Such schemes aimed at random (or more generally ``high entropy'') data become efficient when there are few vectors within distance $r \log |S|$ of a query point.
Another variation of the scheme of~\cite{Arasu_VLDB06} recently appeared in~\cite{DBLP:journals/pvldb/DengLWF15}

%%%%%%%%%%%%%%%%%%%%%%%%%%%%%%%%%%%%%%%%%%%%%%%%%%%%%%%%%%%%%%%%%%%%%%%%%%%%%%

\section{Basic construction}\label{sec:basic}

Our basic CoveringLSH construction is a Hamming projection family of the form (\ref{eq:hamming-projection}).
We start by observing the following simple property of Hamming projection families:

\begin{lemma}\label{lem:xor}
For every $\mathcal{A}\subseteq \{0,1\}^d$, every $h\in \mathcal{H}_\mathcal{A}$, and all $x,y\in \{0,1\}^d$ we have $h(x)=h(y)$ if and only if $h(x\oplus y) = {\bf 0}$.
\end{lemma}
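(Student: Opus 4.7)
The plan is to reduce the claim to the bitwise distributive law of $\wedge$ over $\oplus$. By the definition of $\mathcal{H}_\mathcal{A}$ in equation~(\ref{eq:hamming-projection}), there exists $a\in \mathcal{A}$ such that $h(z) = z\wedge a$ for every $z\in\{0,1\}^d$. So the goal reduces to showing that $x\wedge a = y\wedge a$ if and only if $(x\oplus y)\wedge a = \mathbf{0}$.

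First I would rewrite the equality $x\wedge a = y\wedge a$ as $(x\wedge a)\oplus (y\wedge a) = \mathbf{0}$, using that two bit vectors are equal precisely when their XOR is the zero vector. Then I would apply coordinate-wise the elementary Boolean identity $(u\wedge b)\oplus (v\wedge b) = (u\oplus v)\wedge b$, which can be checked either by a two-line truth table (splitting on $b=0$ versus $b=1$) or by noting that AND distributes over XOR in the field $\mathbb{F}_2$, where $\wedge$ is multiplication and $\oplus$ is addition. Applying this identity at every coordinate $i$ with $u=x_i$, $v=y_i$, $b=a_i$ gives $(x\wedge a)\oplus (y\wedge a) = (x\oplus y)\wedge a$, and the right-hand side is precisely $h(x\oplus y)$.

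Stringing these equivalences together yields $h(x)=h(y) \iff h(x)\oplus h(y) = \mathbf{0} \iff (x\oplus y)\wedge a = \mathbf{0} \iff h(x\oplus y) = \mathbf{0}$, as required. There is no real obstacle here; the whole proof is just an application of distributivity, and the only thing to be slightly careful about is to state explicitly that the argument is coordinate-wise and that $h$ acts by AND-masking with a fixed $a$.
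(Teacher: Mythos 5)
Your proof is correct and is essentially the same coordinate-wise argument as the paper's; the paper observes directly that $h(x)=h(y)$ iff $a_i\ne 0\Rightarrow x_i=y_i$ and then rewrites $x_i=y_i$ as $(x\oplus y)_i=0$, while you package the identical coordinate-wise fact as the distributive law $(x\wedge a)\oplus(y\wedge a)=(x\oplus y)\wedge a$. Either phrasing is a fine one-line proof.
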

\begin{proof}
Let $a\in \mathcal{A}$ be the vector such that $h(x) = x \wedge a$.
We have $h(x)=h(y)$ if and only if $a_i \ne 0 \Rightarrow x_i = y_i$. 
Since $x_i = y_i \Leftrightarrow (x \oplus y)_i = 0$ the claim follows. 
\end{proof}

Thus, to make sure all pairs of vectors within distance $r$ collide for some function, we need our family to have the property (implicit in the work of~\cite{Arasu_VLDB06}) that every vector with 1s in $r$ bit positions is mapped to zero by some function, i.e., the set of 1s is ``covered'' by zeros in a vector from $\mathcal{A}$.

\begin{definition}\label{def:covering}
For $\mathcal{A} \subseteq \{0,1\}^d$, the Hamming projection family $\mathcal{H}_\mathcal{A}$ is \emph{$r$-covering} if for every $x \in \{0,1\}^d$ with $||x|| \leq r$, there exists $h\in \mathcal{H}_\mathcal{A}$ such that $h(x) = {\bf 0}$.
The family is said to have \emph{weight} $\omega$ if $||a||\geq \omega d$ for every $a\in\mathcal{A}$.
\end{definition}

A trivial $r$-covering family uses $\mathcal{A} = \{{\bf 0}\}$.
We are interested in $r$-covering families that have a nonzero weight chosen to make collisions rare among vectors that are not close.
Vectors in our \emph{basic} $r$-covering family, which aims at weight around 1/2, will be indexed by nonzero vectors in $\{0,1\}^{r+1}$.
The family depends on a function $m: \{1,\dots,d\} \rightarrow \{0,1\}^{r+1}$ that maps bit positions to bit vectors of length~$r+1$. 
(We remark that if $d \leq 2^{r+1}-1$ and $m$ is the function that maps an integer to its binary representation, our construction is identical to known coverings based on finite geometry~\cite{gordon1995new}; however we give an elementary presentation that does not require knowledge of finite geometry.)
Define a family of bit vectors $a(v) \in \{0,1\}^d$ by
\begin{equation}\label{def:A_v}
 \tilde{a}(v)_i = \left\{ \begin{array}{ll}
 0 & \text{if } \langle m(i),v\rangle \equiv 0 \text{ mod } 2,\\
 1 & \text{ otherwise}
 \end{array} \right. \enspace .
\end{equation}
where $\langle m(i), v\rangle$ is the dot product of vectors $ m(i)$ and $v$.
We will consider the family of all such vectors with nonzero $v$:
\begin{equation*}%\label{def:A_m}
\mathcal{A}(m) = \big\{ a(v) \; | \; v\in \{0,1\}^{r+1} \backslash \{{\bf 0}\} \big\} \enspace .
\end{equation*}
Figure~\ref{fig:hadamard7} shows the family $\mathcal{A}(m)$ for $r=2$ and $m(i)$ equal to the binary representation of~$i$.

\begin{figure}[t]
	\begin{center}
\includegraphics{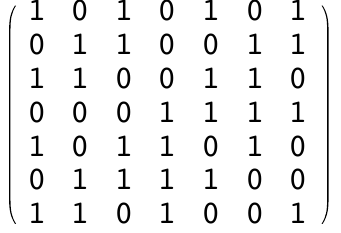}
\caption{The collection $\mathcal{A}_7$ corresponding to nonzero vectors of the Hadamard code of message length~3. The resulting Hamming projection family  $\mathcal{H}_{\mathcal{A}_7}$, see (\ref{eq:hamming-projection}), is $2$-covering since for every pair of columns there exists a row with 0s in these columns. It has weight $4/7$ since there are four 1s in each row.
Every row covers 3 of the 21 pairs of columns, so no smaller $2$-covering family of weight $4/7$ exists.}\label{fig:hadamard7}
	\end{center}
\end{figure}

\begin{lemma}\label{lemma:covering1}
For every $m: \{1,\dots,d\} \rightarrow \{0,1\}^{r+1}$, the Hamming projection family $\mathcal{H}_{\mathcal{A}(m)}$ is $r$-covering.
\end{lemma}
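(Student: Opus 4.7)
The plan is to reduce the covering property to a basic statement about linear algebra over $\mathbb{F}_2$.

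First I would apply Lemma~\ref{lem:xor} in reverse: to show $\mathcal{H}_{\mathcal{A}(m)}$ is $r$-covering, it suffices to show that for every $x \in \{0,1\}^d$ with $\|x\| \leq r$ there is some $v \in \{0,1\}^{r+1}\setminus\{\mathbf{0}\}$ such that $a(v) \wedge x = \mathbf{0}$, equivalently $a(v)_i = 0$ for every $i \in I(x)$. By the definition of $\tilde{a}(v)$ in (\ref{def:A_v}), this is the same as requiring
$$\langle m(i), v \rangle \equiv 0 \pmod{2} \quad \text{for every } i \in I(x).$$

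Next I would view this as a homogeneous linear system over $\mathbb{F}_2$. Let $M$ be the $|I(x)| \times (r+1)$ matrix whose rows are the vectors $m(i)$ for $i \in I(x)$, interpreted in $\mathbb{F}_2^{r+1}$. What we need is a nonzero $v \in \mathbb{F}_2^{r+1}$ in the right kernel of $M$. Since $|I(x)| = \|x\| \leq r$, the rank of $M$ is at most $r$, so by the rank-nullity theorem the kernel has dimension at least $(r+1) - r = 1$ and therefore contains a nonzero vector $v$. Taking $h(x') = x' \wedge a(v)$ yields the desired function in $\mathcal{H}_{\mathcal{A}(m)}$, completing the argument.

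There is no real obstacle here — the construction is tailored exactly so that the covering property reduces to the trivial observation that fewer than $r+1$ vectors in $\mathbb{F}_2^{r+1}$ cannot span the whole space, and hence admit a nonzero simultaneous orthogonal vector. The only thing to be a little careful about is that $v$ must be nonzero (so that $a(v) \in \mathcal{A}(m)$, which excludes $v = \mathbf{0}$), but this is precisely what the strict inequality $|I(x)| \leq r < r+1$ guarantees.
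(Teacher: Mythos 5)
Your proof is correct and uses essentially the same argument as the paper: the paper shows that $\mathrm{span}(m(I(x)))$ has dimension at most $r$ in $\mathbf{F}_2^{r+1}$ and hence admits a nonzero orthogonal vector $v_x$, which is exactly your rank-nullity observation that the kernel of the $|I(x)|\times(r+1)$ matrix over $\mathbf{F}_2$ is nontrivial. The rephrasing via Lemma~\ref{lem:xor} and the explicit note that $v\ne\mathbf{0}$ is needed are both accurate and match the paper's reasoning.
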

\begin{proof}
Let $x\in \{0,1\}^d$ satisfy $||x|| \leq r$ and consider $a(v)\in \mathcal{A}(m)$ as defined in (\ref{def:A_v}).
It is clear that whenever $i\in \{1,\dots,d\} \backslash I(x)$ we have $(a(v) \wedge x)_i = 0$ (recall that $I(x) = \{ i \; | \; x_i = 1\}$).
To consider $(a(v)\wedge x)_i$ for $i\in I(x)$ let $M_x = m(I(x))$, where elements are interpreted as $r+1$-dimensional vectors over the field ${\bf F}_2$.
The span of $M_x$ has dimension at most $|M_x| \leq ||x|| \leq r$, and since the space is $r+1$-dimensional there exists a vector $v_x\ne {\bf 0}$ that is orthogonal to $\text{span}(M_x)$.
In particular $\langle v_x,m(i)\rangle \text{ mod } 2 = 0$ for all $i\in I(x)$.
In turn, this means that $a(v_x) \wedge x = {\bf 0}$, as desired. 
\end{proof}

If the values of the function $m$ are ``balanced'' over nonzero vectors the family $\mathcal{H}_{\mathcal{A}(m)}$ has weight close to~$1/2$ for $d\gg 2^r$. 
More precisely we have:
\begin{lemma}\label{lemma:weight}
Suppose $|m^{-1}(v)|\geq \floor{d/2^{r+1}}$ for each $v\in \{0,1\}^{r+1}$ and $m^{-1}({\bf 0}) = \emptyset$. Then $\mathcal{H}_{\mathcal{A}(m)}$ has weight at least $2^r \floor{d/2^{r+1}}/d > \left(1-\tfrac{2^r}{d}\right)/2$.
\end{lemma}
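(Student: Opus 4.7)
The plan is to fix an arbitrary nonzero $v\in\{0,1\}^{r+1}$ and bound $||a(v)||$ from below. By (\ref{def:A_v}) the $i$-th coordinate of $a(v)$ equals $1$ exactly when $\langle m(i),v\rangle\equiv 1 \bmod 2$, so partitioning the $d$ coordinate positions according to the value $m(i)$ yields
$$||a(v)|| \;=\; \sum_{\substack{u\in\{0,1\}^{r+1} \\ \langle u,v\rangle\equiv 1\,(\text{mod }2)}} |m^{-1}(u)|.$$

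The first key step is a simple linear-algebra observation: for nonzero $v$, the map $u\mapsto \langle u,v\rangle\bmod 2$ is a nontrivial ${\bf F}_2$-linear functional on $\{0,1\}^{r+1}$, so its kernel has codimension~$1$ and hence exactly $2^r$ of the $2^{r+1}$ vectors $u$ are sent to~$1$. The second key step is that $\langle {\bf 0},v\rangle = 0$, placing ${\bf 0}$ on the kernel side; every $u$ appearing in the sum above is therefore nonzero, so the hypothesis $|m^{-1}(u)|\geq \floor{d/2^{r+1}}$ applies to each of the $2^r$ summands. Summing gives $||a(v)||\geq 2^r\floor{d/2^{r+1}}$, and dividing through by $d$ delivers the stated weight bound. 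The final strict inequality reduces to the routine estimate $\floor{d/2^{r+1}} > d/2^{r+1} - 1$, after multiplication by $2^r/d$.

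I expect no real obstacle here: the entire proof is a short calculation, and the only conceptual content is that a nonzero ${\bf F}_2$-linear functional on ${\bf F}_2^{r+1}$ splits the space evenly between preimages of $0$ and~$1$. The hypothesis $m^{-1}({\bf 0})=\emptyset$ plays the minor role of ensuring that the lower bound on preimage sizes is meaningfully available for precisely the nonzero $u$ that appear in the sum, rather than being vacuously (and unusably) applied to~${\bf 0}$.
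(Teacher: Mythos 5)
Your proof takes the same route as the paper: fix a nonzero $v$, observe that exactly $2^r$ vectors $u\in\{0,1\}^{r+1}$ satisfy $\langle u,v\rangle\equiv 1\pmod 2$ (the nontrivial coset of $v^\perp$), note that ${\bf 0}$ is excluded from this set, and sum the preimage-size lower bound over those $2^r$ vectors. That is precisely the paper's argument, and the main conclusion $||a(v)||\geq 2^r\floor{d/2^{r+1}}$ is established correctly.

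One small wrinkle in the final step. You claim the closing strict inequality reduces to $\floor{d/2^{r+1}} > d/2^{r+1}-1$, but multiplying that by $2^r/d$ gives
$\tfrac{2^r}{d}\floor{d/2^{r+1}} > \tfrac{1}{2} - \tfrac{2^r}{d} = \bigl(1-\tfrac{2^{r+1}}{d}\bigr)/2$,
which has a $2^{r+1}$ where the lemma states $2^r$, i.e., your estimate is weaker than the displayed bound by a factor of two in the error term. (The lemma's displayed inequality $2^r\floor{d/2^{r+1}}/d > (1-2^r/d)/2$ does not in fact hold for all $d$: take $r=1$, $d=7$, where the left side is $2/7$ and the right side is $5/14$. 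So this appears to be an imprecision you have inherited from the statement rather than a gap you introduced; the bound you actually derive, $(1-2^{r+1}/d)/2$, is what the floor estimate legitimately yields.) The substantive content of the lemma — the bound $2^r\floor{d/2^{r+1}}/d$ — is proved correctly and by the same method as the paper.
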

\begin{proof}
It must be shown that $||a(v)|| \geq 2^r \floor{d/2^{r+1}}$ for each nonzero vector $v$.
Note that $v$ has a dot product of~1 with a set $V \subseteq \{0,1\}^{r+1}$ of exactly $2^{r}$ vectors (namely the nontrivial coset of $v$'s orthogonal complement).
For each $v'\in V$ the we have $a(v)_i = 1$ for all $i\in m^{-1}(v')$.
Thus the number of 1s in $a(v)$ is:
$$ \sum_{v'\in V} |m^{-1}(v')| \geq 2^r \floor{d/2^{r+1}} > \left(1-\tfrac{2^r}{d}\right)d/2\enspace . $$  
\end{proof}

\medskip

\paragraph{Comment on optimality}
We note that the size $|\mathcal{H}_{\mathcal{A}(m)}| = 2^{r+1}-1$ is close to the smallest possible for an $r$-covering families with weight around $1/2$.
To see this, observe that $\binom{d}{r}$ possible sets of errors need to be covered, and each hash function can cover at most $\binom{d/2}{r}$ such sets.
This means that the number of hash functions needed is at least 
$$\frac{\binom{d}{r}}{\binom{d/2}{r}} > 2^r$$
which is within a factor of~2 from the upper bound. $\triangle$

\medskip

Lemmas~\ref{lemma:covering1} and~\ref{lemma:weight} leave open the choice of mapping~$m$.
We will analyze the setting where $m$ maps to values chosen uniformly and independently from $\{0,1\}^{r+1}$. 
In this setting the condition of Lemma~\ref{lemma:weight} will in general not be satisfied, but it turns out that it suffices for $m$ to have balance in an expected sense.
We can relate collision probabilities to Hamming distances as follows:

\begin{theorem}\label{thm:basic}
	For all $x,y\in \{0,1\}^d$ and for random $m: \{1,\dots,d\} \rightarrow \{0,1\}^{r+1}$,
	\begin{enumerate}
\item If $||x-y||\leq r$ then $\Pr{\exists h\in \mathcal{H}_{\mathcal{A}(m)}: h(x)=h(y)} = 1$.
\item $\E{ | \{ h\in \mathcal{H}_{\mathcal{A}(m)} \;|\; h(x)=h(y) \} | } < 2^{r+1-||x-y||}$.
\end{enumerate}
\end{theorem}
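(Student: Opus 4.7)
The plan is to treat the two parts separately, reducing both to a calculation involving the fixed vector $z = x \oplus y$ via Lemma~\ref{lem:xor}.

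For part~(1), I would invoke Lemma~\ref{lem:xor} to rewrite the event $h(x) = h(y)$ as $h(z) = \mathbf{0}$. Since $\|z\| = \|x-y\| \leq r$, Lemma~\ref{lemma:covering1} (which holds for every mapping $m$) immediately yields the existence of $h \in \mathcal{H}_{\mathcal{A}(m)}$ with $h(z) = \mathbf{0}$. So the probability is~$1$ regardless of the randomness in~$m$.

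For part~(2), set $s = \|x-y\|$ and again use Lemma~\ref{lem:xor} so that the number of collisions equals $|\{v \in \{0,1\}^{r+1}\setminus\{\mathbf{0}\} : a(v)\wedge z = \mathbf{0}\}|$. By the definition~(\ref{def:A_v}), the condition $a(v) \wedge z = \mathbf{0}$ is equivalent to requiring $\langle m(i), v\rangle \equiv 0 \pmod 2$ for every $i\in I(z)$. Fix a nonzero $v$; since $m(i)$ is uniform on $\{0,1\}^{r+1}$ and $v\ne \mathbf{0}$, the linear functional $w\mapsto \langle w,v\rangle \bmod 2$ is surjective onto $\mathbb{F}_2$, hence each coordinate-wise event has probability exactly $1/2$. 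The events for different $i\in I(z)$ are independent because the values $m(i)$ are chosen independently across positions, so
\begin{equation*}
\Pr{a(v)\wedge z = \mathbf{0}} = 2^{-s} \enspace .
\end{equation*}
Linearity of expectation over the $2^{r+1}-1$ nonzero choices of $v$ then gives
\begin{equation*}
\E{|\{h\in \mathcal{H}_{\mathcal{A}(m)} : h(x)=h(y)\}|} = (2^{r+1}-1)\, 2^{-s} < 2^{r+1-s} \enspace .
\end{equation*}

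There is no real obstacle here; the only subtlety worth spelling out is the per-coordinate equidistribution of $\langle m(i),v\rangle \bmod 2$ for $v\ne \mathbf{0}$, together with the observation that although the events $\{\langle m(i),v\rangle \equiv 0\}$ are correlated across different $v$'s (so one cannot sharpen the estimate below $2^{r+1-s}$ by independence in~$v$), they are independent across $i\in I(z)$ for each fixed~$v$, which is exactly what linearity of expectation needs.
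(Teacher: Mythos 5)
Your proof is correct and follows essentially the same route as the paper's: reduce to $z = x\oplus y$ via Lemma~\ref{lem:xor}, invoke Lemma~\ref{lemma:covering1} for part~(1), and for part~(2) compute the per-$v$ zeroing probability $2^{-\|z\|}$ using independence across $i\in I(z)$ and then sum by linearity of expectation. Your extra remarks on equidistribution of $\langle m(i),v\rangle$ and on correlation across different $v$'s make explicit what the paper leaves implicit, but the argument is the same.
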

\begin{proof}
Let $z = x \oplus y$.
For the first part we have $||x-y|| = ||z|| \leq r$.
Lemma~\ref{lemma:covering1} states that there exists 
$h\in \mathcal{H}_{\mathcal{A}(m)}$ such that $h(z)= {\bf 0}$.
By Lemma~\ref{lem:xor} this implies $h(x)=h(y)$.

To show the second part we fix $v\in \{0,1\}^{r+1} \backslash \{{\bf 0}\}$.
Now consider $a(v)\in\mathcal{A}(m)$, defined in~(\ref{def:A_v}), and the corresponding function $h(x) = x \wedge a(v) \in \mathcal{H}_{\mathcal{A}(m)}$.
For $i\in I(z)$ we have $h(z)_i = 0$ if and only if $a(v)_i = 0$.
Since $m$ is random and $v\ne {\bf 0}$ the $a(v)_i$ values are independent and random, so the probability that $a(v)_i = 0$ for all $i\in I(z)$ is $2^{-||z||} = 2^{-||x-y||}$.
By linearity of expectation, summing over $2^{r+1}-1$ choices of $v$ the claim follows. 
\end{proof}

\paragraph{Comments} A few remarks on Theorem~\ref{thm:basic} (that can be skipped if the reader wishes to proceed to the algorithmic results):
\begin{itemize}
\item The vectors in $\mathcal{A}(m)$ can be seen as samples from a \emph{Hadamard code} consisting of $2^{r+1}$ vectors of dimension $2^{r+1}$, where bit $i$ of vector $j$ is defined by $\langle i,j\rangle \text{ mod }2$, again interpreting the integers $i$ and $j$ as vectors in ${\bf F}_2^d$.
Nonzero Hadamard codewords have Hamming weight and minimum distance $2^{r+1}$.
However, it does not seem that error-correcting ability in general yields nontrivial $r$-covering families.

\item The construction can be improved by changing $m$ to map to $\{0,1\}^{r+1}\backslash \{{\bf 0}\}$ and/or requiring the function values of $m$ to be \emph{balanced} such that the number of bit positions mapping to each vector in $\{0,1\}^{r+1}$ is roughly the same.
This gives an improvement when $d\approx 2^r$ but is not significant when $d$ is much smaller or much larger than $2^r$.
To keep the exposition simple we do not analyze this variant.
\item At first glance it appears that the ability to avoid collision for CoveringLSH (``filtering'') is not significant when $||x-y|| = r+1$.
However, we observe that for similarity search in Hamming space it can be assumed without loss of generality that either all distances from the query point are even or all distances are odd.
This can be achieved by splitting the data set into two parts, having even and odd Hamming weight, respectively, and handling them separately.
For a given query $y$ and radius~$r$ we then perform a search in each part, one with radius~$r$ and one with radius $r-1$ (in the part of data where distance $r$ to $y$ is not possible).
This reduces the expected number of collisions at distance $r+1$ to at most $1/2$. $\triangle$
\end{itemize}

\paragraph{Nearest neighbor} 

Above we have assumed that the search radius $r$ was given in advance, but it turns out that CoveringLSH supports also supports finding the nearest neighbor, under the condition that the distance is at most~$r$.
To see this, consider the subfamily of $\mathcal{A}(m)$ indexed by vectors of the form $0^{r+1-r_1} v_1$, where $v_1\in \{0,1\}^{r_1+1}\backslash\{{\bf 0}\}$ for some $r_1\leq r$, then collision is guaranteed up to distance $r_1$.
That is, we can search for a nearest neighbor at an unknown distance in a natural way, by letting $m$ map randomly to $\{0,1\}^{\lceil \log n\rceil}$ and choosing $v$ as the binary representation of $1,2,3,\dots$ (or alternatively, the vectors in a Gray code for $\{0,1\}^{\lceil \log n\rceil}$).
In either case Theorem~\ref{thm:basic} implies the invariant that the nearest neighbor has distance at least $\lfloor \log v \rfloor$, where $v$ is interpreted as an integer.
This means that when a point $x$ at distance at most $c\, \lfloor \log (v+1) \rfloor$ is found, we can stop after finishing iteration $v$ and return $x$ as a $c$-approximate nearest neighbor.
Figure~\ref{fig:code} gives pseudocode for data structure construction and nearest neighbor queries using CoveringLSH.\footnote{A corresponding Python implementation is available on github, {\tt https://github.com/rasmus-pagh/coveringLSH}.}
% Argument: After going through v-1 >= 2^{r+1}-1 vectors with no success, we know that there cannot exist a vector at distance r. Equivalently, we need log v to be at least r+1 to ensure a lower bound on the nearest neighbor of r+1, where r is an integer. This means that floor(log v) is a lower bound on the distance. (This works even when the closest vector has distance zero.)

\begin{figure}[t]
	\hrule
%	\begin{tabular}{ll}
	\begin{minipage}{.5\linewidth}
	\begin{tabbing}
	  xx\=xx\=xx\=xx\=xx\=xx\=xx\=\kill\\
	  {\bf procedure} {\sc InitializeCovering}$(d,r)$\+\\
	    {\bf for} $v\in \{0,1\}^{r+1}$ {\bf do} $A[v] := 0^{d}$\\
	    {\bf for} $i:=1$ to $d$ {\bf do} \+\\
			$ m := \text{\sc Random}(\{0,1\}^{r+1}\backslash \{{\bf 0}\})$\\
	    	{\bf for} $v\in \{0,1\}^{r+1}$ {\bf do}
				$A[v]_i := \langle m, v\rangle$ mod $2$\-\\
	  {\bf end for}\-\\
	  {\bf end}\\
	  \\
	  {\bf function} {\sc BuildDataStructure}$(S,r)$\+\\
	  	$D = \emptyset$\\
%	    {\bf for} $x\in S$ {\bf do}\+\\
%		   {\bf for} $v\in \{0,1\}^{r+1}\backslash \{{\bf 0}\}$ {\bf do}\+\\
	    {\bf for} $x\in S$, $v\in \{0,1\}^{r+1}\backslash \{{\bf 0}\}$ {\bf do}\+\\
		      $D[x \wedge A[v]] := D[x \wedge A[v]] \cup \{x\}$\-\\
		{\bf return} $D$\-\\
	  {\bf end}\\
	\end{tabbing}
	\end{minipage}
	\begin{minipage}{.5\linewidth}
	\begin{tabbing}
	  xx\=xx\=xx\=xx\=xx\=xx\=xx\=\kill
	  \+\+\\
	  {\bf function} {\sc NearestNeighbor}$(D,r,y)$\+\\
	  	$best := \infty$\\ 
	  	$nn := {\tt null}$\\ 
	    {\bf for} $v:=1$ to $2^{r+1}-1$ {\bf do} \+\\
			{\bf for} $x \in D[y \wedge A[\text{\sc BitVec}(v,r+1)]]$ {\bf do}\+\\
		  		{\bf if} $||x-y|| < best$ {\bf then}\+\\
					$best = ||x-y||$\\
					$nn = x$\-\\
				{\bf end if}\-\\
			{\bf end for}\\
			{\bf if} $best \leq \lfloor \log(v+1)\rfloor$ {\bf then return} $nn$\-\\
		{\bf end for}\\
		{\bf return} ${\tt null}$\-\\
	  {\bf end}
	\end{tabbing}
	\vspace{2.5mm}
	\end{minipage}
	\hrule
\caption{Pseudocode for constructing (left) and querying (right) a nearest neighbor data structure on a set $S\subseteq \{0,1\}^d$ as described in section~\ref{sec:randomdata}.
Parameter $r$ controls the largest radius for which a nearest neighbor is returned.
This is the simplest instantiation of CoveringLSH --- it works well on high-entropy data where there are few points within distance $r + \log_2 |S|$ of a query point.
In this setting, given a query point $y$, the expected search time for finding a nearest neighbor $x$ is $\BO{2^{||x-y||}}$. If only a $c$-approxiate nearest neighbor is sought the condition $best \leq \lfloor \log(v+1)\rfloor$ should be changed to $best \leq c \lfloor \log(v+1)\rfloor$. \smallskip\newline
{\bf Notation:} The function $\text{\sc Random}$ returns a random element from a given set.
The inner product $\langle m, v\rangle$ can be computed by a bitwise conjunction followed by counting the number of bits set ({\sc popcnt}).
$D[i]$ is used to denote the information associated with key $i$ in the dictionary $D$ that is the main part of the data structure; if $i$ is not a key in $D$ then $D[i]=\emptyset$. 
The function call $\text{\sc BitVec}(v,r+1)$ typecasts an integer to a bit vector of dimension $r+1$.
Finally, $||x-y||$ denotes the Hamming distance between $x$ and $y$. 
\smallskip\newline
{\bf Other comments:} 
Vectors are stored $2^{r+1}-1$ times in $D$, but may be represented as references to a single occurrence in memory to achieve better space complexity for large~$d$.
The global dictionary $A$, which contains a covering independent of the set $S$, must be initialized by {\sc InitializeCovering} before {\sc BuildDataStructure} is called.
Note that the function $m$ is not stored, as it is not needed after constructing the covering.}
\label{fig:code}
\end{figure}

\subsection{Approximation factor $c=\log(n)/r$}\label{sec:randomdata}

We first consider a case in which the method above directly gives a strong result, namely when the threshold $cr$ for being an approximate near neighbor equals $\log n$.
%Note that approximation factor $c=\log(n)/r$ corresponds to any vector at distance at most $\log n$ from $y$ being a valid answer.
Such a threshold may be appropriate for high-entropy data sets of dimension $d>2\log n$ where most distances tend to be large (see~\cite{kucherov2005multiseed,norouzi2012fast} for discussion of such settings).
In this case Theorem~\ref{thm:basic} implies efficient $c$-approximate near neighbor search in expected time $\BO{\Delta_S 2^r} = \BO{\Delta_S\, n^{1/c}}$, where $\Delta_S$ bounds the time to compute the Hamming distance between query vector $y$ and a vector $x\in S$. This matches the asymptotic time complexity of~\cite{Indyk1998}.

To show this bound observe that the expected total number of collisions $h(x)=h(y)$, summed over all $h\in \mathcal{H}_{\mathcal{A}(m)}$ and $x\in S$ with $||x-y||\geq \log n$, is at most $2^{r+1}$.
This means that computing $h(y)$ for each $h\in \mathcal{H}_{\mathcal{A}(m)}$ and computing the distance to the vectors that are not within distance $cr$ but collide with $y$ under some $h\in \mathcal{H}_{\mathcal{A}(m)}$ can be done in expected time $\BO{\Delta_S 2^r}$.
The expected bound can be supplemented by a high probability bound as follows: Restart the search in a new data structure if the expected time is exceeded by a factor of~2. Use $\BO{\log n}$ data structures and resort to brute force if this fails, which happens with polynomially small probability in~$n$.

What we have bounded is in fact performance on a \emph{worst case} data set in which most data points are just above the threshold for being a $c$-approximate near neighbor.
In general the amount of time needed for a search will depend on the distribution of distances between~$y$ and data points, and may be significantly lower.

The space required is $\BO{2^r n} = \BO{n^{1+1/c}}$ words plus the space required to store the vectors in~$S$, again matching the bound of Indyk and Motwani.
In a straightforward implementation we need additional space $\BO{d}$ to store the function $m$, but if $d$ is large (for sets of sparse vectors) we may reduce this by only storing $m(i)$ if there exists $x\in S$ with $x_i \ne 0$.
With this modification, storing $m$ does not change the asymptotic space usage.
For dense vectors it may be more desirable to explicitly store the set of covering vectors $\mathcal{A}(m)$ rather than the function $m$, and indeed this is the approach taken in the pseudocode.

\medskip

\paragraph{Example} Suppose we have a set $S$ of $n=2^{30}$ vectors from $\{0,1\}^{128}$ and wish to search for a vector at distance at most $r=10$ from a query vector $y$.
A brute-force search within radius~$r$ would take much more time than linear search, so we settle for $3$-approximate similarity search.
Vectors at distance larger than $3r$ have collision probability at most $1/(2n)$ under each of the $2^{r+1}-1$ functions in $h\in \mathcal{H}_{\mathcal{A}(m)}$, so in expectation there will be less than $2^r = 1024$ hash collisions between $y$ and vectors in $S$. 
The time to answer a query is bounded by the time to compute $2047$ hash values for $y$ and inspect the hash collisions.

It is instructive to compare to the family $\mathcal{H}_{\mathcal{A}(R)}$ of Indyk and Motwani, described in section~\ref{sec:lsh}, with the same performance parameters ($2047$ hash evaluations, collision probability $1/(2n)$ at distance $31$).
A simple computation shows that for $k=78$ samples we get the desired collision probability, and collision probability $(1-r/128)^{78} \approx 0.0018$ at distance~$r=10$.
This means that the probability of a false negative by not producing a hash collision for a point at distance $r$ is $(1-(1-r/128)^{78})^{2047} > 0.027$.
So the risk of a false negative is nontrivial given the same time and space requirements as our ``covering'' LSH scheme. $\triangle$

% Independent sampling:
% Probability 1/2 of sampling each bit, i.e., collision at distance 10 with probability 2^{-10}.
% False negative probability (1-2^{-10})^{2047} = 0.135...
%
% Sampling a fixed number k of bits (possibly with repetitions):
% (1-31/128)^k = 2^{-31} => k > 77.48 => k>=78
% Collision probability at distance 10 is then (1-10/128)^78 = 0.001755...
% Repeating 2047 times we get (1-0.001755...)^2047 = 0,02741084721437...

%%%%%%%%%%%%%%%%%%%%%%%%%%%%%%%%%%%%%%%%%%%%%%%%%%%%%%%%%%%%%%%%%%%%%%%%%%%%%%%

\section{Construction for large distances}\label{sec:large}

Our basic construction is only efficient when $cr$ has the ``right'' size (not too small, not too large). 
We now generalize the construction to arbitrary values of $r$, $cr$, and $n$, with a focus on efficiency for large distances. 
In a nutshell:
\begin{itemize}
\item For an arbitrary choice of $cr$ (even much larger than $\log n$) we can achieve performance that differs from classical LSH by a factor of $\ln(4) < 1.4$ in the exponent.%, asymptotically as $c$ grows.
\item We can match the exponent of classical LSH for the $c$-approximate near neighbor problem whenever $\lceil\log n\rceil/(cr)$ is (close to) integer.
\end{itemize}
\noindent
We still use a Hamming projection family (\ref{eq:hamming-projection}), changing only the set $\mathcal{A}$ of bit masks used.
Our data structure will depend on parameters $c$ and $r$, i.e., these can not be specified as part of a query.
Without loss of generality we assume that $cr$ is integer.

\medskip

\paragraph{Intuition}
When $cr < \log n$ we need to increase the average number of 1s in the bit masks to reduce collision probabilities.
The increase should happen in a correlated fashion in order to maintain the guarantee of collision at distance~$r$.
The main idea is to increase the fraction of 1s from $1/2$ to $1-2^{-t}$, for $t\in{\bf N}$, by essentially repeating the sampling from the Hadamard code $t$ times and selecting those positions where at least one sample hits a~1. 

On the other hand, when $cr > \log n$ we need to decrease the average number of 1s in the bit masks to increase collision probabilities. This is done using a refinement of the partitioning method of~\cite{Arasu_VLDB06} which distributes the dimensions across partitions in a balanced way.
The reason this step does not introduce false negatives is that for each data point $x$ there will always exist a partition in which the distance between query $y$ and~$x$ is at most the average across partitions. 
An example is shown in figure~\ref{fig:hadamard2x7}.
$\triangle$
\medskip

\begin{figure}[t]
	\begin{center}
\includegraphics{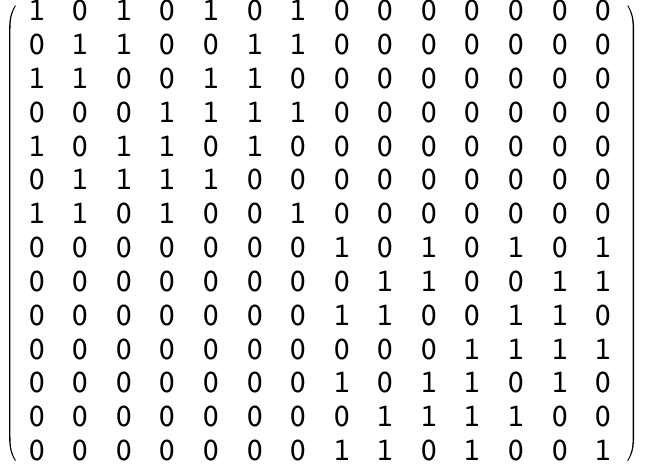}
\caption{The collection $\mathcal{A}_{2\times7}$ containing two copies of the collection $\mathcal{A}_7$ from figure~\ref{fig:hadamard7}, one for each half of the dimensions. The resulting Hamming projection family  $\mathcal{H}_{\mathcal{A}_{2\times7}}$, see (\ref{eq:hamming-projection}), is $5$-covering since for set of 5 columns there exists a row with 0s in these columns. It has weight $4/14$ since there are four 1s in each row.
% N[Binomial[14, 5]/Binomial[10, 5]] = 7.94444
Every row covers $\binom{10}{5}$ sets of 5 columns, so a lower bound on the size of a $5$-covering collection of weight $4/14$ is $\lceil \tbinom{14}{5}/\tbinom{10}{5}\rceil = 8$.}\label{fig:hadamard2x7}
	\end{center}
\end{figure}

We use $b,q\in {\bf N}$ to denote, respectively, the number of partitions and the number of partitions to which each dimension belongs.
Observe that if we distribute $q$ copies of $r$ ``mismatching'' dimensions across $b$ partitions, there will always exist a partition with at most $r' = \lfloor r q/b \rfloor$ mismatches.
Let $\text{Intervals}(b,q)$ denote the set of intervals in $\{1,\dots,b\}$ of length $q$, where intervals are considered modulo~$b$ (i.e., with wraparound).
We will use two random functions,
$$m: \{1,\dots,d\} \rightarrow \left(\{0,1\}^{t r' + 1} \right)^t$$
$$s: \{1,\dots,d\} \rightarrow \text{Intervals}(b,q)$$
to define a family of bit vectors $a(v,k) \in \{0,1\}^d$, indexed by vectors $v\in \{0,1\}^{tr'+1}$ and $k\in \{1,\dots,b\}$. 
We define a family of bit vectors $a(v,k)\in \{0,1\}^d$ by
\begin{equation}\label{def:A_vk}
a(v,k)_i = s^{-1}(k)_i \wedge \left(\bigvee_j \langle m(i)_j, v\rangle \text{ mod } 2 \ne 0\right),
\end{equation}
where $s^{-1}(k)$ is the preimage of $k$ under $s$ represented as a vector in $\{0,1\}^d$ (that is, $s^{-1}(k)_i = 1$ if and only if $s(i)=k$), and $\langle m(i)_j, v\rangle$ is the dot product of vectors $ m(i)_j$ and~$v$.
We will consider the family of all such vectors with nonzero $v$:
$$\mathcal{A}(m,s) = \big\{ a(v,k) \; | \; v\in \{0,1\}^{t r'+1}\backslash\{{\bf 0}\},\; k\in\{1,\dots,b\} \big\} \enspace .$$
Note that the size of $\mathcal{A}(m,s)$ is $b\, (2^{t r'+1}-1) < 2b\, 2^{trq/b}$.

\begin{lemma}\label{lemma:covering2}
For every choice of $b,d,q,t\in {\bf N}$, and every choice of functions $m$ and $s$ as defined above, the Hamming projection family $\mathcal{H}_{\mathcal{A}(m,s)}$ is $r$-covering.
\end{lemma}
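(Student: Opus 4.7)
The plan is to mirror the proof of Lemma~\ref{lemma:covering1}, but first apply a pigeonhole/averaging step that picks out a ``good'' partition and restricts attention to it. Throughout the argument I use $r' = \lfloor rq/b \rfloor$ as in the setup.

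First I would fix $x \in \{0,1\}^d$ with $||x|| \le r$ and use the structure of $s$ to locate a partition index $k$ in which $x$ has few $1$s. Since $s(i)$ is an interval of length $q$ in $\{1,\dots,b\}$ (with wraparound), each coordinate $i$ lies in exactly $q$ partitions, i.e.\ $\sum_{k=1}^{b} s^{-1}(k)_i = q$. Summing this over $i\in I(x)$ gives
\[
\sum_{k=1}^{b} |I(x)\cap s^{-1}(k)| \;=\; q\,||x|| \;\le\; qr,
\]
so by averaging there exists $k\in\{1,\dots,b\}$ with $|I(x)\cap s^{-1}(k)|\le \lfloor qr/b\rfloor = r'$. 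Fix such a~$k$ and set $I' = I(x)\cap s^{-1}(k)$.

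Next I would show that for this $k$ we can choose a nonzero $v$ so that $a(v,k)\wedge x = {\bf 0}$. By definition~(\ref{def:A_vk}), $a(v,k)_i = 0$ whenever $s(i)\ne k$, so the only coordinates where $(a(v,k)\wedge x)_i$ can possibly equal $1$ are those in $I'$. For $i\in I'$ we have $s^{-1}(k)_i = 1$ and $x_i = 1$, so we need
\[
\bigvee_{j=1}^{t}\bigl(\langle m(i)_j,v\rangle \bmod 2 \ne 0\bigr) \;=\; 0,
\]
i.e.\ $\langle m(i)_j,v\rangle \equiv 0 \pmod 2$ for every $i\in I'$ and every $j\in\{1,\dots,t\}$. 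Viewing $v$ as a vector in ${\bf F}_2^{tr'+1}$, this is a system of at most $|I'|\cdot t \le r' t$ homogeneous linear equations. Since the ambient space has dimension $tr'+1$, the solution space has dimension at least $1$, so a nonzero $v_x$ satisfying all these equations exists. For the corresponding $h(x) = x \wedge a(v_x,k) \in \mathcal{H}_{\mathcal{A}(m,s)}$ we obtain $h(x) = {\bf 0}$, as desired.

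The only genuinely delicate step is the averaging argument in the first paragraph: one has to verify that each coordinate really does lie in exactly $q$ partitions, which uses the precise definition of $\text{Intervals}(b,q)$ (intervals modulo $b$), and that $\lfloor qr/b\rfloor = r'$ matches the $r'$ used in the dimension of the index space for $v$. Once that bookkeeping is in place the rest is a direct reduction to the linear-algebra core of Lemma~\ref{lemma:covering1}.
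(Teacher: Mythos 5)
Your proof is correct and follows essentially the same route as the paper's: a pigeonhole step (using that each coordinate lies in exactly $q$ of the $b$ partitions) to pick a partition $k$ with at most $r' = \lfloor rq/b\rfloor$ problematic coordinates, followed by a dimension-counting argument over ${\bf F}_2$ to find a nonzero $v$ in the orthogonal complement of the at most $tr'$ vectors $m(i)_j$, $i\in I'$, $j\in\{1,\dots,t\}$. If anything you are slightly more explicit than the paper about why $\sum_k s^{-1}(k)_i = q$ and about why coordinates outside the chosen partition are automatically zeroed by $a(v,k)$, but the underlying argument is the same.
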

\begin{proof}
Let $x\in\{0,1\}^d$ satisfy $||x||\leq r$. 
We must argue that there exists a vector $v^*\in \{0,1\}^{tr'+1}\backslash\{{\bf 0}\}$ and $k^*\in \{1,\dots,b\}$ such that $a(v^*,k^*)\wedge x = {\bf 0}$, i.e., by (\ref{def:A_vk})
$$\forall i: x_i \wedge s^{-1}(k)_i \wedge (\bigvee_j \langle m(i)_j, v\rangle \text{ mod } 2 \ne 0) = 0 \enspace .$$
We let $k^* = \arg\min || x \wedge s^{-1}(k) ||$, breaking ties arbitrarily.
Informally, $k^*$ is the partition with the smallest number of 1s in $x$.
Note that $\sum_{k=1}^b || x \wedge s^{-1}(k) || = qr$ so by the pigeonhole principle, $|| x \wedge s^{-1}(k^*) || \leq \lfloor r q/b \rfloor = r'$.
Now consider the ``problematic'' set $I(x \wedge s^{-1}(k^*))$ of positions of 1s in $x \wedge s^{-1}(k^*)$, and the set of vectors that $m$ associates with it:
$$M_x = \{ m(I(x \wedge s^{-1}(k^*)))_j \; | \; j\in \{1,\dots,t\}\} \enspace .$$
The span of $M_x$ has dimension at most $|M_x| \leq tr'$.
This means that there must exist $v^*\in \{0,1\}^{tr'+1}\backslash\{{\bf 0}\}$ that is orthogonal to all vectors in $M_x$.
In particular this implies that for each $i\in I_x$ we have $\bigvee_j \langle m(i)_j,v^*\rangle \text{ mod 2} \ne 0$ is false, as desired. 
\end{proof}

We are now ready to show the following extension of Theorem~\ref{thm:basic}:
\begin{theorem}\label{thm:main}
	For random $m$ and $s$, for every $b,d,q,r,t\in {\bf N}$ and $x,y\in \{0,1\}^d$:
	\begin{enumerate}
\item $||x-y|| \leq r \Rightarrow \Pr{\exists h\in \mathcal{H}_{\mathcal{A}(m,s)}: h(x)=h(y)} = 1$.
\item $\E{ | \{ h\in \mathcal{H}_{\mathcal{A}(m,s)} \;|\; h(x)=h(y) \} | }
	< \left(1 - (1-2^{-t}) q/b \right) ^ {||x-y||} b\, 2^{trq/b+1}$.
\end{enumerate}
\end{theorem}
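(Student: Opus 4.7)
The plan is to proceed in parallel with the proof of Theorem~\ref{thm:basic}, using Lemma~\ref{lemma:covering2} in place of Lemma~\ref{lemma:covering1}.

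For part~(1), I would let $z = x \oplus y$, so $\|z\| = \|x-y\| \leq r$. Applying Lemma~\ref{lemma:covering2} to $z$ produces $h \in \mathcal{H}_{\mathcal{A}(m,s)}$ with $h(z) = \mathbf{0}$, and then Lemma~\ref{lem:xor} converts this to a collision $h(x) = h(y)$. This step is purely deterministic once $m,s$ are fixed.

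For part~(2), the main step is to compute, for each fixed $(v,k)$ with $v\neq\mathbf{0}$, the probability $p_{v,k} := \Pr[a(v,k) \wedge z = \mathbf{0}]$. By definition~(\ref{def:A_vk}), $a(v,k)_i = 0$ happens exactly when either $s(i)$ does not cover $k$, or every inner product $\langle m(i)_j, v\rangle$ is even. Since the values $(m(i), s(i))$ are drawn independently across different coordinates $i$, the events ``$a(v,k)_i = 0$'' are independent over $i \in I(z)$, so $p_{v,k}$ factors as a product over these coordinates. For a single coordinate $i \in I(z)$: the interval $s(i)$ is uniform over $\text{Intervals}(b,q)$, hence covers $k$ with probability $q/b$; conditioned on coverage, each of the $t$ independent dot products $\langle m(i)_j, v\rangle$ is uniform in $\mathbb{F}_2$ because $v \neq \mathbf{0}$, so all $t$ are zero with probability $2^{-t}$. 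Combining, $\Pr[a(v,k)_i = 1] = (q/b)(1 - 2^{-t})$, and therefore
\[
p_{v,k} = \left(1 - (1-2^{-t})\, q/b\right)^{\|z\|} = \left(1 - (1-2^{-t})\, q/b\right)^{\|x-y\|}.
\]

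The final step is to sum over all $(v,k)$ using linearity of expectation and Lemma~\ref{lem:xor}. There are $b(2^{tr'+1}-1) < b\cdot 2^{tr'+1}$ such pairs, and since $r' = \lfloor rq/b\rfloor \leq rq/b$ we get $2^{tr'+1} \leq 2^{trq/b+1}$, yielding the claimed bound $b\cdot 2^{trq/b+1}\cdot (1-(1-2^{-t})q/b)^{\|x-y\|}$.

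The only subtle point I anticipate is justifying the independence/uniformity used to factor $p_{v,k}$: one must note that fixing a nonzero $v$ makes the map $m(i)_j \mapsto \langle m(i)_j, v\rangle \bmod 2$ a balanced linear functional on $\{0,1\}^{tr'+1}$, so a uniformly random $m(i)_j$ gives a uniform bit, and independence across $j=1,\dots,t$ and across coordinates $i$ is inherited from the construction of $m$. Everything else is a routine expectation calculation, and the proof should be only a few lines longer than the proof of Theorem~\ref{thm:basic}.
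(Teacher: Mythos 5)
Your proposal is correct and follows essentially the same approach as the paper's proof: reduce to $h(z)={\bf 0}$ via Lemma~\ref{lem:xor}, invoke Lemma~\ref{lemma:covering2} for part~(1), compute the per-coordinate probability $1-(1-2^{-t})q/b$ and use independence of $\{a(v,k)_i\}_i$ (inherited from independence of $m$ and $s$ across coordinates) to factor the collision probability, then sum over the $b(2^{tr'+1}-1)<b\,2^{trq/b+1}$ members of the family by linearity of expectation. The only cosmetic difference is that you compute $\Pr[a(v,k)_i=1]$ and complement, whereas the paper computes $\Pr[a(v,k)_i=0]$ directly by splitting into the two disjoint cases.
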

\begin{proof}
    By Lemma~\ref{lem:xor} we have $h(x)=h(y)$ if and only if $h(z)={\bf 0}$ where $z = x \oplus y$.
	So the first part of the theorem is a consequence of Lemma~\ref{lemma:covering2}.
	For the second part consider a particular vector $a(v,k)$, where $v$ is nonzero, and the corresponding hash value $h(z) = z \wedge a(v,k)$.
	We argue that over the random choice of $m$ and $s$ we have, for each $i$:
	\begin{align}\label{eq:zeroprob}
	\Pr{ a(v,k)_i = 0 }  \;=\; & \Pr{ s^{-1}(k)_i = 0 } + \Pr{s^{-1}(k)_i = 1  \; \wedge \; \forall j: \langle m(i)_j,v\rangle \equiv 0 \text{ mod } 2} \nonumber \\
	\;=\; & (1-q/b) + 2^{-t} q/b \\
	\;=\; & 1 - (1-2^{-t})q/b \nonumber \enspace .
	\end{align}
	The second equality uses independence of the vectors $\{ m(i)_j \; | \; j=1,\dots,t\}$ and $s(i)$, and that for each $j$ we have $\Pr{\langle m(i)_j,v\rangle \equiv 0 \text{ mod }2} = 1/2$.
	Observe also that $a(v,k)_i$ depends only on $s(i)$ and $m(i)$.
	Since function values of~$s$ and $m$ are independent, so are the values 
	$$\{ a(v,k)_i \; | \; i\in\{1,\dots,d\}\} \enspace .$$
	This means that the probability of having $a(v,k)_i = 0$ for all $i$ where $z_i=1$ is a product of probabilities from~(\ref{eq:zeroprob}):
\begin{align*}
	\Pr{h(x)=h(y)} & = \prod_{i\in I_z} \left(1 - (1-2^{-t})q/b\right)
 	=\left(1 - (1-2^{-t}) q/b \right)^{||x-y||} \enspace .
 \end{align*}
The second part of the theorem follows by linearity of expectation, summing over the 
vectors in $\mathcal{A}(m,s)$. 
\end{proof}

\subsection{Choice of parameters}

The expected time complexity of $c$-approximate near neighbor search with radius~$r$ is bounded by the size $|\mathcal{A}|$ of the hash family plus the expected number $\kappa_\mathcal{A}$ of hash collisions between the query~$y$ and vectors $S$ that are not $c$-approximate near neighbors. Define
$$S_\text{far} = \{ x\in S \; | \; ||x-y|| > cr \} \text{ and }
\kappa_\mathcal{A} = \E{ | \{ (x,h)\in S_\text{far} \times \mathcal{H}_{\mathcal{A}} \;|\; h(x)=h(y) \} | }$$
where the expectation is over the choice of family $\mathcal{A}$.
Choosing parameters $t$, $b$, and $q$ in Theorem~\ref{thm:main} in order to get a family $\mathcal{A}$ that minimizes $|\mathcal{A}| + \kappa_\mathcal{A}$ is nontrivial.
Ideally we would like to balance the two costs, but integrality of the parameters means that there are ``jumps'' in the possible sizes and filtering efficiencies of
$\mathcal{H}_{\mathcal{A}(m,s)}$. Figure~\ref{fig:numerical-optimization-low-radius} shows bounds achieved by numerically selecting the best parameters in different settings. We give a theoretical analysis of some choices of interest below. In the most general case the strategy is to reduce to a set of subproblems that hit the ``sweet spot'' of the method, i.e., where $|\mathcal{A}|$ and $\kappa_\mathcal{A}$ can be made equal. $\triangle$

\begin{figure*}[t]
	\begin{center}
	\includegraphics[width=\textwidth]{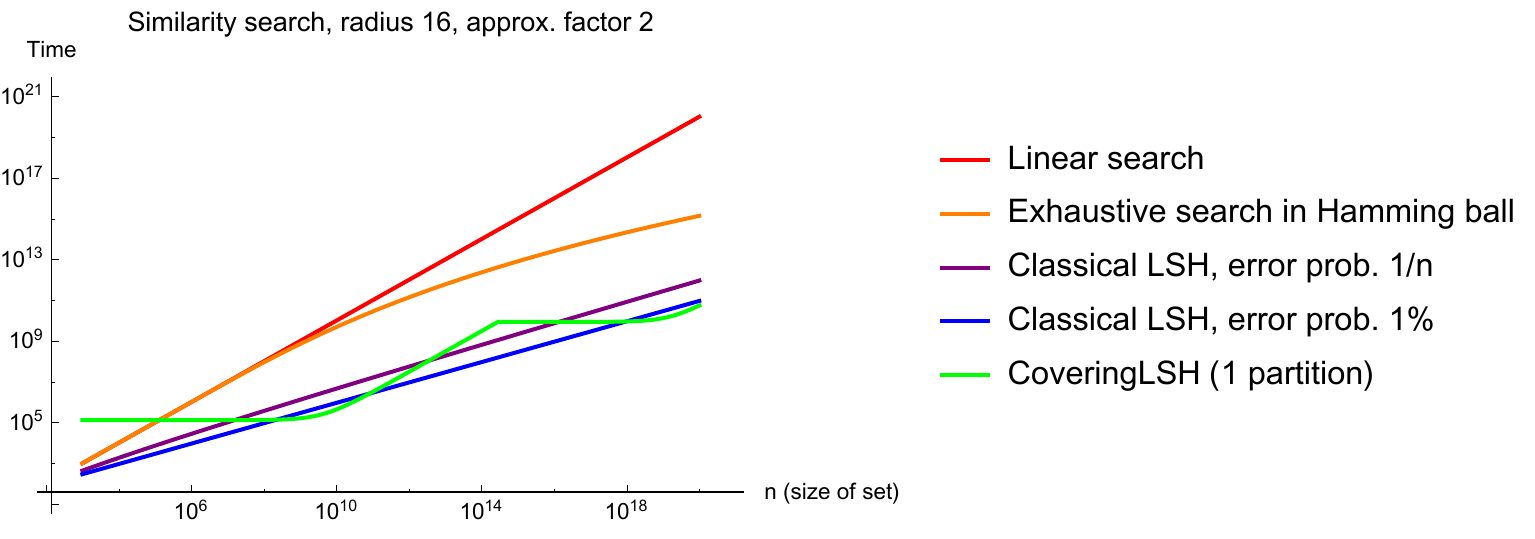}\\
	\bigskip
	\includegraphics[width=\textwidth]{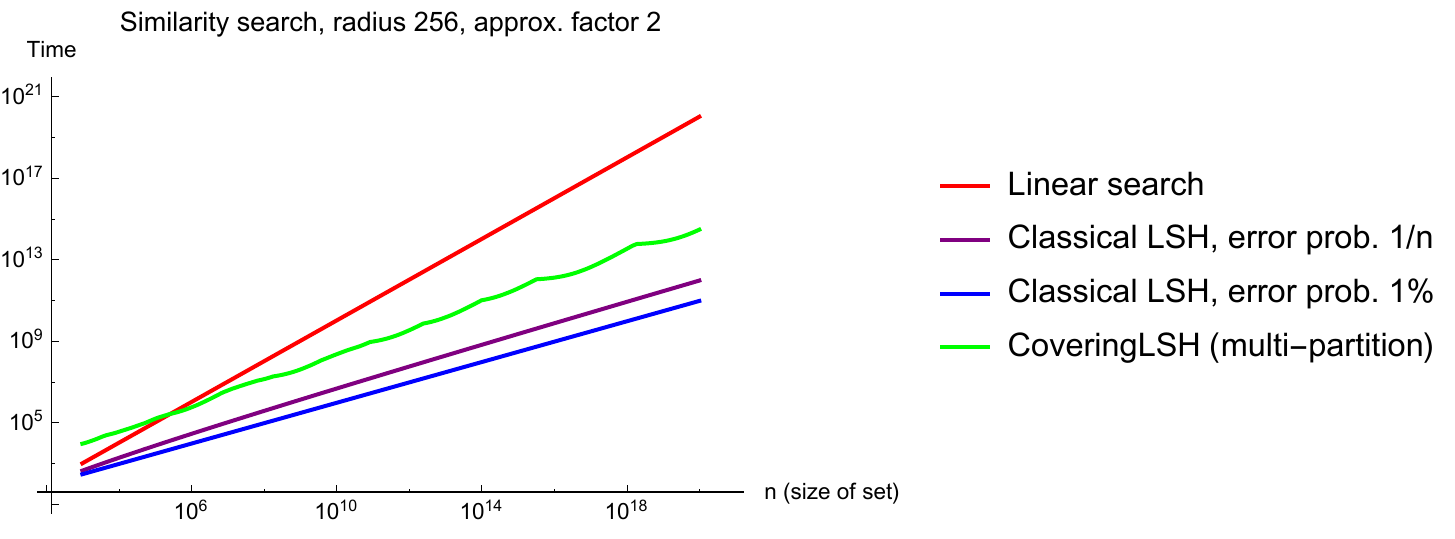}
	\caption{Expected number of memory accesses for different similarity search methods for finding a vector within Hamming distance~$r$ of a query vector~$y$. The plots are for $r=16$ and $r=256$, respectively, and are for a worst-case data set where all points have distance~$2r$ from~$y$, i.e., there exists no $c$-approximate near neighbor for an approximation factor $c<2$. The bound for exhaustive search in a Hamming ball of radius $r$ optimistically assumes that the number of dimensions is~$\log_2 n$, which is smallest possible for a data set of size $n$ (for $r=256$ this number is so large that it is not even shown).
	Two bounds are shown for the classical LSH method of Indyk and Motwani: A small fixed false negative probability of 1\%, and a false negative probability of $1/n$.
	The latter is what is needed to ensure no false negatives in a sequence of $n$ searches.
	The bound for CoveringLSH in the case $r=16$ uses a single partition ($b=1$), while for $r=256$ multiple partitions are used.}\label{fig:numerical-optimization-low-radius}
	\end{center}
\end{figure*}

\begin{corollary}\label{cor:params}
	For every $c>1$ there exist explicit, randomized $r$-covering Hamming projection families $\mathcal{H}_{\mathcal{A}_1}$, $\mathcal{H}_{\mathcal{A}_2}$
	such that for every $y\in \{0,1\}^d$:
	\begin{enumerate}
		\item $|\mathcal{A}_1| \leq 2^{r+1} n^{1/c}$ and 
		$\kappa_{\mathcal{A}_1} < 2^{r+1} n^{1/c}$.
		\item If $\log(n)/(cr) + \varepsilon \in {\bf N}$, for $\varepsilon > 0$, then $|\mathcal{A}_1| \leq 2^{\varepsilon r+1} n^{1/c}$ and $\kappa_{\mathcal{A}_1} < 2^{\varepsilon r+1} n^{1/c}$.
		\item If $r>\ceil{\ln(n)/c}$ then $|\mathcal{A}_2| \leq 8r\, n^{\ln(4)/c}$ and 
		$\kappa_{\mathcal{A}_2} < 8r\, n^{\ln(4)/c}$.
	\end{enumerate}
\end{corollary}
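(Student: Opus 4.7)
The plan is to apply Theorem~\ref{thm:main} with specific parameter choices in each regime and read off the bounds.

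For Parts~1 and 2 I would take the single family $\mathcal{H}_{\mathcal{A}_1}$ obtained by the basic construction $b=q=1$ (so Theorem~\ref{thm:main} collapses to Theorem~\ref{thm:basic}) with $t=\max(1,\lceil\log(n)/(cr)\rceil)$. Then $tr\leq r+\log(n)/c$ yields $|\mathcal{A}_1|=2^{tr+1}\leq 2^{r+1}n^{1/c}$, and $(2^{-t})^{cr}\leq 1/n$ gives $\kappa_{\mathcal{A}_1}\leq n\cdot\tfrac{1}{n}\cdot|\mathcal{A}_1|\leq |\mathcal{A}_1|$, which is Part~1. For Part~2 the hypothesis $\log(n)/(cr)+\varepsilon\in\mathbb{N}$ means the integer $t$ above can be taken equal to $\log(n)/(cr)+\varepsilon$, so that $tr=\log(n)/c+\varepsilon r$ exactly and the same computation sharpens both bounds to $2^{\varepsilon r+1}n^{1/c}$.

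For Part~3 I would use the partitioned construction with $t=1$, choosing $b,q\in\mathbb{N}$ to approximate the ``sweet-spot'' ratio $q/b=2\ln n/(cr)$, at which the expected collision probability at distance $cr$ is $e^{-\ln n}=1/n$ and both $|\mathcal{A}|$ and $\kappa_\mathcal{A}$ scale as $2b\cdot n^{\ln 4/c}$ (since $2^{rq/b}=2^{2\ln n/c}=n^{\ln 4/c}$). In the main regime $r\geq\lceil 2\ln n/c\rceil$, take $b=r$ and $q=\lceil 2\ln n/c\rceil$: the constraint $q\leq b$ holds by hypothesis, $q/(2b)\geq\ln(n)/(cr)$ gives $(1-q/(2b))^{cr}\leq 1/n$, and $rq/b\leq 2\ln n/c+1$ gives $2^{rq/b}\leq 2n^{\ln 4/c}$. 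Plugging into Theorem~\ref{thm:main} yields $|\mathcal{A}_2|\leq 2b\cdot 2n^{\ln 4/c}=4r\,n^{\ln 4/c}$ and $\kappa_{\mathcal{A}_2}\leq n\cdot\tfrac{1}{n}\cdot|\mathcal{A}_2|$, both safely within $8r\,n^{\ln 4/c}$.

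The residual range $\lceil\ln n/c\rceil<r<\lceil 2\ln n/c\rceil$ (which forces $r\leq 2\ln n/c$) is handled by the basic $b=q=t=1$ choice: the size bound $|\mathcal{A}_2|=2^{r+1}\leq 2\cdot 2^{2\ln n/c}=2n^{\ln 4/c}\leq 8r\,n^{\ln 4/c}$ is immediate, and the filtering bound $\kappa_{\mathcal{A}_2}\leq 2n\cdot 2^{-r(c-1)}$ lands within the target after using $r>\lceil\ln n/c\rceil$ to lower-bound the exponent $r(c-1)$. The main obstacle is exactly this boundary regime: integrality of $b$ and $q$ prevents us from landing on the sweet spot, and the generous constant $8$ (rather than the cleaner $4$) in the stated bound is chosen precisely to absorb the rounding loss across the transition between the two regimes.
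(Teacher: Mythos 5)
Your handling of Parts 1 and 2 matches the paper exactly: $b=q=1$, $t=\lceil\log(n)/(cr)\rceil$, and Part~2 is obtained by observing the rounding loss in $t$ shrinks to $2^{\varepsilon r}$. Your Part~3 ``main regime'' choice $b=r$, $q=\lceil 2\ln(n)/c\rceil$, $t=1$ is a minor variant of the paper's $q=2\lceil\ln(n)/c\rceil$ and works for $r\geq q$.

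The gap is in the residual regime. You claim that for $\lceil\ln(n)/c\rceil<r<\lceil 2\ln(n)/c\rceil$ the unpartitioned family $b=q=t=1$ meets the stated $\kappa$ bound after lower-bounding $r(c-1)$ using $r>\ln(n)/c$. This does not hold. That lower bound gives
$\kappa_{\mathcal{A}_2} < 2n\cdot 2^{-r(c-1)} < 2n\cdot n^{-\ln 2\,(1-1/c)} = 2\,n^{1-\ln 2+\ln 2/c},$
and the required inequality $2\,n^{1-\ln 2+\ln 2/c}\leq 8r\,n^{\ln 4/c}$ reduces (since $\ln 4/c = 2\ln 2/c$) to $n^{1-\ln 2-\ln 2/c}\leq 4r$. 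The exponent $1-\ln 2-\ln 2/c$ is \emph{positive} as soon as $c>\ln 2/(1-\ln 2)\approx 2.26$, so the left-hand side grows polynomially in $n$ while $r=\Theta(\ln n/c)$ is only logarithmic. Concretely, for $c=3$ and $\ln n=300$ the residual range is $100<r<200$; at $r=101$ your bound gives $\kappa<2n\cdot 2^{-202}$, whose logarithm is about $160.7$, while $\ln\bigl(8r\,n^{\ln 4/c}\bigr)\approx 145.3$, so the claim fails. The underlying reason is that the ``sweet spot'' of the unpartitioned family sits at $r\approx\log_2(n)/c=\ln(n)/(c\ln 2)\approx 1.44\,\ln(n)/c$, not at $r\geq\ln(n)/c$; the subrange roughly $\bigl(\ln(n)/c,\ 1.44\ln(n)/c\bigr)$ is covered neither by your main regime (which needs $r\geq\lceil 2\ln(n)/c\rceil$) nor by your fallback. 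The paper itself does not split the range: it applies $q=2\lceil\ln(n)/c\rceil$, $b=r$ to all $r>\lceil\ln(n)/c\rceil$, which you were (reasonably) wary of since this can put $q>b$ and strain the identity $\Pr[s^{-1}(k)_i=0]=1-q/b$ used in Theorem~\ref{thm:main}. But the fix cannot be a reversion to the unpartitioned family; one needs a partitioned choice with $q\leq b$ (e.g., smaller $q$ compensated by $t>1$), or to extend the hypothesis to $r\geq 2\lceil\ln(n)/c\rceil$.
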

\begin{proof}
		We let $\mathcal{A}_1 = \mathcal{A}(m,s)$ with $b=q=1$ and $t = \ceil{\log(n)/(cr)}$. Then 
		$$|\mathcal{A}_1| < 2b\, 2^{trq/b} = b\, 2^{tr+1} \leq 2^{(\log(n)/(cr)+1)r+1} = 2^{r+1} n^{1/c} \enspace .$$
		Summing over $x\in S_\text{far}$ the second part of Theorem~\ref{thm:main} yields: 
		$$\kappa_{\mathcal{A}_1} < n 2^{-tcr} 2^{tr+1} \leq 2^{tr+1} \leq 2^{r+1} n^{1/c} \enspace .$$

\medskip
		
		For the second bound on $\mathcal{A}_1$ we notice that the factor $2^r$ is caused by the rounding in the definition of~$t$, which can cause $2^{tr}$ to jump by a factor $2^r$. When $\log(n)/(cr) + \varepsilon$ is integer we instead get a factor $2^{\varepsilon r}$.
		
\medskip
		
		Finally, we let $\mathcal{A}_2 = \mathcal{A}(m,s)$ with $b=r$, $q=2\ceil{\ln(n)/c}$, and $t = 1$. The size of $\mathcal{A}_2$ is bounded by 
		$b\, 2^{trq/b+1} \leq r\, 2^{2\ln(n)/c + 3} = 8r\, n^{\ln(4)/c}$.
		Again, by Theorem~\ref{thm:main} and summing over $x\in S_\text{far}$:
		\begin{align*}
			\kappa_{\mathcal{A}_2} & < n \left(1 - q/(2r) \right)^{cr} r\, 2^{q+1}
			 < n \exp\left(- qc/2) \right) r\, 2^{q+1}
			 < r\, 2^{q+1}
			 < 8r\, n^{\ln(4)/c},
		\end{align*}
		where the second inequality follows from the fact that $1-\alpha < \exp(-\alpha)$ when $\alpha > 0$. 
\end{proof}

%%%%%%%%%%%%%%%%%%%%%%%%%%%%%%%%%%%%%%%%%%%%%%%%%%%%%%%%%%%%%%%%%%%%%%%%%%%%%%

\section{Construction for small distances}\label{sec:small-radius}

In this section we present a different generalization of the basic construction of Section~\ref{sec:basic} that is more efficient for small distances, $cr\leq \log(n)/(3\log\log n)$, than the construction of Section~\ref{sec:large}.
The existence of \emph{asymptotically} good near neighbor data structures for small distances is not a big surprise: For $r = o(\log(n)/\log\log n)$ it is known how to achieve query time $n^{o(1)}$~\cite{Cole:2004:DMI:1007352.1007374}, even with $c=1$.
In practice this will most likely no faster than linear search for realistic values of~$n$ except when $r$ is a small constant.
In contrast we seek a method that has reasonable constant factors and may be useful in practice.

The idea behind the generalization is to consider vectors and dot products modulo~$p$ for some prime $p>2$.
This corresponds to using finite geometry coverings over the field of size $p$~\cite{gordon1995new}, but like in Section~\ref{sec:basic} we make an elementary presentation without explicitly referring to finite geometry.
Vectors in the $r$-covering family, which aims at weight around $1-1/p$, will be indexed by nonzero vectors in $\{0,\dots,p-1\}^{r+1}$.
Generalizing the setting of Section~\ref{sec:basic}, the family depends on a function $m: \{1,\dots,d\} \rightarrow \{0,\dots,p-1\}^{r+1}$ that maps bit positions to vectors of length~$r+1$. 
Define a family of bit vectors $\tilde{a}(v) \in \{0,1\}^d$, $v\in \{0,\dots,p-1\}^{r+1}$ by
\begin{equation}\label{def:Atilde_v}
 \tilde{a}(v)_i = \left\{ \begin{array}{ll}
 0 & \text{if } \langle m(i),v\rangle \equiv 0 \text{ mod } p,\\
 1 & \text{ otherwise}
 \end{array} \right. \enspace .
\end{equation}
for all $i\in\{1,\dots,d\}$, where $\langle m(i), v\rangle$ is the dot product of vectors $m(i)$ and $v$.
We will consider the family of all such vectors with nonzero $v$:
\begin{equation*}%\label{def:A_m}
\tilde{\mathcal{A}}(m) = \big\{ \tilde{a}(v) \; | \; v\in \{0,\dots,p-1\}^{r+1} \backslash \{{\bf 0}\} \big\} \enspace .
\end{equation*}

% Example? It seems not possible to give a small example showing an advantage over a brute force covering that selects all combinations of (p-1)r blocks out of pr blocks.

\begin{lemma}\label{lemma:covering3}
For every $m: \{1,\dots,d\} \rightarrow \{0,\dots,p-1\}^{r+1}$, the Hamming projection family $\mathcal{H}_{\tilde{\mathcal{A}}(m)}$ is $r$-covering.
\end{lemma}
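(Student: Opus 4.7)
The plan is to mimic the proof of Lemma~\ref{lemma:covering1} almost verbatim, replacing the field ${\bf F}_2$ with the prime field ${\bf F}_p$. The key structural facts we needed there were (i) that $\mathbb{F}_2^{r+1}$ is an $(r+1)$-dimensional vector space, and (ii) that the span of at most $r$ vectors has dimension at most $r$, so a nonzero vector in its orthogonal complement exists. Both facts carry over because $p$ is prime, making ${\bf F}_p$ a field and ${\bf F}_p^{r+1}$ an $(r+1)$-dimensional vector space.

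First, fix $x\in\{0,1\}^d$ with $\|x\|\leq r$, and set $I(x)=\{i\mid x_i=1\}$ as before. For indices $i\notin I(x)$ we have $x_i=0$, so regardless of the choice of $v$ we get $(\tilde{a}(v)\wedge x)_i=0$. Thus, as in Lemma~\ref{lemma:covering1}, it suffices to find a nonzero $v\in\{0,\dots,p-1\}^{r+1}$ such that $\tilde{a}(v)_i=0$ for every $i\in I(x)$, which by the definition~(\ref{def:Atilde_v}) amounts to $\langle m(i),v\rangle \equiv 0 \pmod p$ for all $i\in I(x)$.

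Next, view the set $M_x=m(I(x))$ as a collection of at most $\|x\|\le r$ vectors in the ${\bf F}_p$-vector space ${\bf F}_p^{r+1}$. Its span has dimension at most $|M_x|\leq r<r+1$, so the orthogonal complement (with respect to the standard bilinear form $\langle\cdot,\cdot\rangle$ on ${\bf F}_p^{r+1}$) has dimension at least $1$ and therefore contains some nonzero vector $v_x$. By construction $\langle v_x,m(i)\rangle \equiv 0 \pmod p$ for every $i\in I(x)$, so $\tilde{a}(v_x)\wedge x = {\bf 0}$, and since $v_x\neq {\bf 0}$ the corresponding function $h(z)=z\wedge \tilde{a}(v_x)$ lies in $\mathcal{H}_{\tilde{\mathcal{A}}(m)}$, establishing the $r$-covering property.

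I expect no real obstacle: the argument is a direct transcription of the ${\bf F}_2$ proof, and the only thing to check is that $p$ being prime is exactly what makes the linear-algebraic dimension count work. No probabilistic argument or use of $\tilde{a}$ being $\{0,1\}$-valued (rather than $\{0,\dots,p-1\}$-valued) is needed, since the thresholding in~(\ref{def:Atilde_v}) only affects weight/efficiency analyses, not the covering guarantee.
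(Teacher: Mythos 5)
Your proof is correct and is exactly the argument the paper intends: the paper's own proof of Lemma~\ref{lemma:covering3} simply states that it is identical to that of Lemma~\ref{lemma:covering1} with ${\bf F}_2$ replaced by ${\bf F}_p$. You have correctly spelled out the details, in particular the dimension count over ${\bf F}_p^{r+1}$ that yields a nonzero $v_x$ orthogonal to $M_x$, and correctly observed that primality of $p$ is what makes ${\bf F}_p$ a field so that this linear algebra applies.
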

\begin{proof}
	Identical to the proof of Lemma~\ref{lemma:covering1}. The only difference is that we consider the field ${\bf F}_p$ of size~$p$.
\end{proof}

%Similar to Lemma~\ref{lemma:weight} it can be shown that if the values of the function $m$ are ``balanced'' over nonzero vectors the family $\mathcal{H}_{\tilde{\mathcal{A}}(m)}$ has weight close to~$1-1/p$ for $d\gg 2^r$. 

Next, we relate collision probabilities to Hamming distances as follows:
\begin{theorem}\label{thm:finitegeometry}
	For all $x,y\in \{0,1\}^d$ and for random $m: \{1,\dots,d\} \rightarrow \{0,\dots,p-1\}^{r+1}$,
	\begin{enumerate}
\item If $||x-y||\leq r$ then $\Pr{\exists h\in \mathcal{H}_{\tilde{\mathcal{A}}(m)}: h(x)=h(y)} = 1$.
\item $\E{ | \{ h\in \mathcal{H}_{\tilde{\mathcal{A}}(m)} \;|\; h(x)=h(y) \} | } < p^{r+1-||x-y||}$.
\end{enumerate}
\end{theorem}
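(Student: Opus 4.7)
The plan is to mirror the proof of Theorem~\ref{thm:basic} almost verbatim, replacing the binary field ${\bf F}_2$ with ${\bf F}_p$ throughout, and invoking Lemmas~\ref{lem:xor} and~\ref{lemma:covering3} in place of Lemmas~\ref{lem:xor} and~\ref{lemma:covering1}.

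For part~(1), I would set $z = x \oplus y$ and observe $\|z\| = \|x-y\| \leq r$. Lemma~\ref{lemma:covering3} guarantees some $h\in \mathcal{H}_{\tilde{\mathcal{A}}(m)}$ with $h(z) = {\bf 0}$, and then Lemma~\ref{lem:xor} turns this into $h(x) = h(y)$. Note that the $r$-covering property of $\mathcal{H}_{\tilde{\mathcal{A}}(m)}$ holds for \emph{every} choice of $m$, so this part holds with probability~1 regardless of the randomness.

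For part~(2), the main computation is a per-coordinate probability bound. I would fix a nonzero $v \in \{0,\dots,p-1\}^{r+1}$ and consider the bit mask $\tilde{a}(v)$ together with the hash function $h(x) = x \wedge \tilde{a}(v)$. As before, $h(x)=h(y)$ is equivalent to $\tilde{a}(v)_i = 0$ for every $i\in I(z)$. From definition (\ref{def:Atilde_v}), this requires $\langle m(i),v\rangle \equiv 0 \pmod p$ for each such $i$. The key observation, which takes the place of the base-2 parity argument in Theorem~\ref{thm:basic}, is that for any fixed nonzero $v\in {\bf F}_p^{r+1}$ the map $u \mapsto \langle u, v\rangle \bmod p$ is a surjective ${\bf F}_p$-linear map ${\bf F}_p^{r+1} \to {\bf F}_p$, hence sends the uniform distribution on ${\bf F}_p^{r+1}$ to the uniform distribution on ${\bf F}_p$. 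Consequently, for a uniformly random $m(i)$,
\begin{equation*}
\Pr{\langle m(i), v\rangle \equiv 0 \pmod p} = 1/p \enspace .
\end{equation*}
Independence of the $m(i)$'s across $i$ then gives $\Pr{h(x)=h(y)} = p^{-\|x-y\|}$.

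Finally, I would sum over the $p^{r+1}-1 < p^{r+1}$ nonzero choices of $v$ and apply linearity of expectation, obtaining the stated bound $p^{r+1-\|x-y\|}$. No step is really an obstacle: the only thing to be careful about is the uniformity claim, which is trivial once one remembers that ${\bf F}_p$ is a field (so nonzero coordinates of $v$ are invertible and the linear form is nondegenerate). The analogous base-2 argument in Theorem~\ref{thm:basic} relied on this same fact specialized to $p=2$, so no new ideas are required.
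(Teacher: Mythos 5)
Your proposal is correct and follows the paper's own proof exactly: the paper likewise says the argument is completely analogous to Theorem~\ref{thm:basic}, with part~(1) from Lemma~\ref{lemma:covering3} and part~(2) from $\Pr{\langle m(i),v\rangle \equiv 0 \text{ mod } p} = 1/p$ summed over $p^{r+1}-1$ values of $v$. The only difference is that you spell out why the $1/p$ probability holds (surjectivity of a nonzero linear form over ${\bf F}_p$), which the paper leaves implicit.
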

\begin{proof}
The proof is completely analogous to that of Theorem~\ref{thm:basic}. 
The first part follows from Lemma~\ref{lemma:covering3}.
For the second part we use that $\Pr{\langle m(i),v\rangle \equiv 0 \text{ mod } p} = 1/p$ for each $v\ne{\bf 0}$ and that we are summing over $p^{r+1}-1$ values of $v$.
\end{proof}

Now suppose that $cr \leq \log(n)/(3 \log\log n)$ and let $p$ be the smallest prime number such that $p^{cr} > n$, or in other words the smallest prime $p > n^{1/(cr)}$. We refer to the family $\tilde{\mathcal{A}}(m)$ with this choice of $p$ as $\mathcal{A}_3$, and note that $|\mathcal{A}_3| < p^{r+1}$.

By the second part of Theorem~\ref{thm:finitegeometry} the expected total number of collisions $h(x)=h(y)$, summed over all $h\in \mathcal{H}_{\mathcal{A}_3}$ and $x\in S$ with $||x-y||\geq cr$, is at most $p^{r+1}$.
This means that computing $h(y)$ for each $h\in \mathcal{H}_{\mathcal{A}_3}$ and computing the distance to the vectors that are not within distance $cr$ but collide with $y$ under some $h\in \mathcal{H}_{\mathcal{A}_3}$ can be done in expected time $\BO{\Delta_S\, p^r}$.

What remains is to bound $p^{r+1}$ in terms of $n$ and~$c$.
According to results on prime gaps (see e.g.~\cite{dudek2014explicit} and its references)
%cheng2010 contains errors according to this paper
there exists a prime between every pair of cubes $\alpha^3$ and $(\alpha+1)^3$ for $\alpha$ larger than an explicit constant.
We will use the slightly weaker upper bound $(1+4/\alpha) \alpha^3 > (\alpha+1)^3$, which holds for $\alpha > 4$.
If $n$ exceeds a certain constant, since $p$ is the smallest such prime, choosing $\alpha = n^{1/(3cr)}$ we have $p < (1+4/\alpha) \alpha^3$.
By our upper bound on $cr$ we have $\alpha > n^{\log\log(n)/\log n} = \log n$.
Using $r+1 \leq \log(n)$ we have
\begin{equation}
|\mathcal{A}_3| < p^{r+1} < ((1+4/\alpha) \alpha^3)^{r+1} < (1+4/\log n)^{\log n} n^{\frac{r+1}{cr}} < e^4 n^{\frac{r+1}{cr}} \enspace .
\end{equation}

\paragraph{Improvement for small $r$}

To asymptotically improve this bound for small $r$ we observe that without loss of generality we can assume that $cr \geq \log(n)/(6 \log\log n)$: 
If this is not the case move to vectors of dimension $dt$ by repeating all vectors $t$ times, where $t$ is the largest integer with $crt \leq \log(n)/(3 \log\log n)$. 
This increases all distances by a factor exactly $t < \log n$, and increases $\Delta_S$ by at most a factor $t < \log n$.
Then we have:
\begin{equation}\label{eq:smallbound}
|\mathcal{A}_3| < p^{r+1} < n^{\frac{r+1}{cr}} = n^{1/c+1/(cr)} \leq n^{1/c} (\log n)^6 \enspace .
\end{equation}
That is, the expected time usage of $p^{r+1}$ matches the asymptotic time complexity of~\cite{Indyk1998} up to a polylogarithmic factor.

\paragraph{Comments}

In principle, we could combine the construction of this section with partitioning to achieve improved results for some parameter choices.
However, it appears difficult to use this for improved bounds in general, so we have chosen to not go in that direction.
The constant 3 in the upper bound on $cr$ comes from bounds on the maximum gap between primes. A proof of Cramér's conjecture on the size of prime gaps would imply that 3 can be replaced by any constant larger than~1, which in turn would lead to a smaller exponent in the polylogarithmic overhead.

%%%%%%%%%%%%%%%%%%%%%%%%%%%%%%%%%%%%%%%%%%%%%%%%%%%%%%%%%%%%%%%%%%%%%%%%%%%%%%

\section{Proof of Theorem~\ref{thm:RAM}}

The data structure will choose either $\mathcal{A}_1$ or $\mathcal{A}_2$ of Corollary~\ref{cor:params}, or $\mathcal{A}_3$ of section~\ref{sec:small-radius} with size bounded in (\ref{eq:smallbound}), depending on which $i\in\{1,2,3\}$ minimizes $|\mathcal{A}_i| + \kappa_{\mathcal{A}_i}$. The term $n^{0.4/c}$ comes from part (3) of Corollary~\ref{cor:params} and the inequality $\ln(4) < 1.4$.

The resulting space usage is $\BO{|\mathcal{A}_i| n \log n + nd}$ bits, representing buckets by list of pointers to an array of all vectors in~$S$.
Also observe that the expected query time is bounded by $|\mathcal{A}_i| + \kappa_{\mathcal{A}_i}$.
\qed

\section{Conclusion and open problems}

We have seen that, at least in Hamming space, LSH-based similarity search can be implemented to avoid the problem of false negatives at little or no cost in efficiency compared to conventional LSH-based methods.
The methods presented are simple enough that they may be practical.
An obvious open problem is to completely close the gap, or show that a certain loss of efficiency is necessary (the non-constructive bound in section~\ref{sec:nofalseneg} shows that the gap is at most a factor $\BO{d}$).

It is of interest to investigate the possible time-space trade-offs.
CoveringLSH uses superlinear space and employs a data independent family of functions. 
Is it possible to achieve covering guarantees in linear or near-linear space?
Can data structures with very fast queries and polynomial space usage match the performance achievable with false negatives~\cite{DBLP:journals/corr/Laarhoven15a}?

Another interesting question is what results are possible in this direction for other spaces and distance measures, e.g., $\ell_1$, $\ell_2$, or $\ell_\infty$.
For example, a more practical alternative to the reduction of~\cite{DBLP:conf/stoc/Indyk07} for handling $\ell_1$ and $\ell_2$ would be interesting.

Finally, CoveringLSH is data independent. Is it possible to improve performance by using data dependent techniques?

\medskip

{\bf Acknowledgements.} The author would like to thank: Ilya Razenshteyn for useful comments; Thomas Dybdal Ahle, Ugo Vaccaro, and Annalisa De Bonis for providing references to work on explicit covering designs; Piotr Indyk for information on reduction from $\ell_1$ and $\ell_2$ metrics to the Hamming metric; members of the Scalable Similarity Search project for many rewarding discussions of this material.

\newpage

\bibliographystyle{ACM-Reference-Format-Journals}
\bibliography{coveringLSH}

\end{document}